\definecolor{darkred}  {rgb}{0.5,0,0}
\definecolor{darkblue} {rgb}{0,0,0.5}
\definecolor{darkgreen}{rgb}{0,0.5,0}
\def\>{\rangle}
\def\<{\langle}
\def\mN{\mathcal{N}}
\def\mS{\mathcal{S}}
\def\mT{\mathcal{T}}
\def\mV{\mathcal{V}}
\renewcommand{\qedsymbol}{\nobreak \ifvmode \relax \else
	\ifdim \lastskip<1.5em \hskip-\lastskip \hskip1.5em plus0em
	minus0.5em \fi \nobreak \vrule height0.75em width0.5em
	depth0.25em\fi}
\renewcommand{\geq}{\geqslant}
\renewcommand{\leq}{\leqslant}
\newtheorem{theorem}{Theorem}
\newtheorem*{theorem*}{Theorem}
\newtheorem{lemma}{Lemma}
\newtheorem*{lemma*}{Lemma}
\newtheorem{definition}{Definition}
\newtheorem*{definition*}{Definition}
\theoremstyle{remark}
\newtheorem{remark}{Remark}
\theoremstyle{definition}
\newcommand{\bea}{\begin{eqnarray}}
\newcommand{\eea}{\end{eqnarray}}
\newcommand{\be}{\begin{equation}}
\newcommand{\ee}{\end{equation}}
\newcommand{\ba}{\begin{equation}\begin{aligned}}
\newcommand{\ea}{\end{aligned}\end{equation}}
\def\be{\begin{equation}}
\def\ee{\end{equation}}
\newcommand{\pr}{{\rm Prob}}
\newcommand{\mM}{\mathcal{M}}
\newcommand{\mbb}[1]{\mathbb{#1}}
\newcommand{\bra}[1]{\langle #1|}
\newcommand{\ket}[1]{|#1\rangle}
\newcommand{\eqdef}{\coloneqq}
\def\p{\mathbf{p}}
\def\q{\mathbf{q}}
\def\e{\mathbf{e}}
\def\r{\mathbf{r}}
\def\v{\mathbf{v}}
\def\t{\mathbf{t}}
\def\s{\mathbf{s}}
\def\e{\mathbf{e}}
\def\b{\mathbf{b}}
\newcommand{\da}{\downarrow}
\begin{document} 
\title{What is Entropy? \\
A new perspective from games of chance} 
	
\author{Sarah \surname{Brandsen}}\email{sarah.brandsen@duke.edu}
\address{
Department of Physics,
Duke University, 
Durham, NC, USA 27708}
\address{
Department of Mathematics and Statistics, Institute for Quantum Science and Technology,
University of Calgary, AB, Canada T2N 1N4}
\author{Isabelle Jianing \surname{Geng}}
\address{
Department of Mathematics and Statistics, Institute for Quantum Science and Technology,
University of Calgary, AB, Canada T2N 1N4}
\author{Gilad \surname{Gour}}
\address{
Department of Mathematics and Statistics, Institute for Quantum Science and Technology,
University of Calgary, AB, Canada T2N 1N4}

\begin{abstract}
Given entropy's central role in multiple areas of physics and science, one important task is to develop a systematic and unifying approach to defining entropy. Games of chance become a natural candidate for characterising the uncertainty of a physical system, as a system's performance in gambling games depends solely on the uncertainty of its output. In this work, we construct families of games which induce pre-orders corresponding to majorization, conditional majorization, and channel majorization. Finally, we provide operational interpretations for all pre-orders, show the relevance of these results to dynamical resource theories, and find the only asymptotically continuous classical dynamic entropy. 
\end{abstract}
\maketitle

{\it Introduction.} Entropy plays a central role in many areas of physics and science including statistical mechanics, thermodynamics, information theory, black hole physics, cosmology, chemistry, and even economics~\cite{entropy_econ, Bekenstein1, Bekenstein2}. Consequently, there are multiple approaches to understanding entropy: in thermodynamics it can be understood as a measure of energy dispersal at a given temperature, whereas in information theory entropy is a compression rate. Other attributes to entropy such as disorder, chaos,  randomness of a system, and the arrow of time~\cite{arrow_of_time}, have also been studied extensively in literature.  These different attributes and contexts also leads to different measures of entropy such as the Gibbs and Boltzmann entropy~\cite{Jaynes}, Tsallis entropies~\cite{Tsallis}, R\'enyi entropies~\cite{Renyi}, and the von-Neumann~\cite{VonNeumann} and Shannon entropies~\cite{Shannon}, and other entropy functions such as molar entropy~\cite{MolarEntropy}, entropy of mixing~\cite{MixingEntropy}, loop entropy~\cite{LoopEntropy}, and so on.

The diverse roles of entropy cry out for a more systematic and unifying approach, in which entropy is defined rigorously and in a way that is independent of the physical context. We identify \emph{uncertainty} as the key trait of entropy that is common in all of its expressions. It could be the uncertainty about the state of a physical system, the uncertainty of a source in a compression scheme, or uncertainty corresponding to disorder and randomness.

Games of chance~\cite{guessinggames} are ideal candidates for studying uncertainty, as the performance in such games depends solely on the certainty about the outcome of the game such that ``more uncertain'' systems will be less likely to win. Thus, for any physical object such as a system, measurement, or channel, we define its degree of uncertainty in terms of the probability of winning a game of chance. Since we construct a family of games, uncertainty cannot be quantified with a single function, but rather is characterized with a partial order where system $A$ is said to be ``less uncertain'' than system $B$ if system $A$ performs at least as well as system $B$ for \emph{all} games of chance. 

In this paper we construct gambling games that give rise to three types of partial orders: majorization, conditional majorization, and channel majorization. The first two partial orders characterize the degree of uncertainty and conditional uncertainty in (possibly composite) physical systems, while the last characterizes the uncertainty associated with a channel. Crucially, we additionally provide an operational characterization for all partial orders: for example, we demonstrate that conditional majorization is equivalent to a relation induced by a conditional random relabeling map.

Our work extends previous results about the entropy of channels~\cite{Gour3, Yuan, Fang, Devetak, DDI, Brandsen}, and the induced partial orders we find are consistent with intuition as well as operationally motivated. In the case of channel majorization, we find the unique asymptotically continuous classical channel entropy function. Additionally, we provide a first operational interpretation to the complete family of dynamical monotones introduced in~\cite{Gour2}. Finally, the wide applicability of games of chance opens up opportunities to extend our work to the quantum world and operationally characterize the entropy of a quantum channel.



{\it Dice Games and Majorization.} Gambling games are games in which a player is provided with partial information and use statistical inference to take their best guess in the face of incomplete information. We first consider a gambling game in which the host rolls a biased dice, and the player has to guess its outcome. Denote by $\p=(p_1,...,p_d)^T$ the probability vector corresponding to the $n$ possible outcomes, and denote by $\p^{\downarrow}=(p_1^{\downarrow},...,p_d^{\downarrow})^T$ the vector obtained from $\p$ by rearranging its components in non-increasing order. For simplicity, we will always assume that the components of any probability vectors $\p$ are arranged in non-increasing order so that $\p=\p^{\downarrow}$. 

A \emph{$w$-gambling game} occurs when the player is allowed to provide a set with $w$-numbers as guesses prior to rolling the dice. The player then wins if the outcome from the dice roll belongs to the set of guesses. For example, if $w=2$, then the player will choose to provide numbers $\{1, 2\}$ (as these have the highest probability of occurring), and will win the game with probability $p_1^{\downarrow}+p_2^{\downarrow}$. In general, the probability of winning a $w$-game with dice $\mathbf{p}$ can be denoted as:

\be
\pr_w(\p)=\|\p\|_{(w)}\eqdef\sum_{x=1}^{w}p_x^{\downarrow}
\ee
where $\|\cdot\|_{(w)}$ denotes the Ky-Fan norm. 

Suppose that at the beginning of each game, the player is allowed to choose between two dice with corresponding probabilities $\p$ and $\q$. Clearly, the player will choose the dice which gives better odds of winning the game, and so will choose the $\p$-dice if $\|\p\|
_{(w)}\geq\|\q\|_{(w)}$. In general, the player's choice will depend on the value of $w$- for example, if $\p = (\frac{1}{2}, \frac{1}{2}, 0)$ and $\q = (\frac{2}{3}, \frac{1}{6}, \frac{1}{6})$ then the player will choose $\q$ when $w=1$ and $\p$ when $w=2$.

If the player knows the distribution $\{t_w\}_{w=1}^{m}$ from which the $w$-gambling game is determined, then the probability that the player wins such a $w$-game is given by 
\be
\pr_\t(\p)=\sum_{k=1}^{m}t_k\|\p\|_{(k)} :=\sum_{x=1}^{m}\sum_{k=x}^{m}t_kp_x\equiv\sum_{x=1}^{m}r_xp_x=\r\cdot\p\;,
\ee
where the vector $\r$ with components $\{r_x\}$ is given by $\r\equiv U\t$ and $U$ is the $m\times m$ invertible upper triangular matrix
\be\label{ut}
U\equiv\begin{pmatrix}
1 & 1  & \cdots & 1\\
0 & 1  & \cdots & 1\\
\vdots &  \vdots & \ddots & \vdots\\
0 & 0 & \cdots & 1
\end{pmatrix}\quad
\ee
Note that the vector $\r$ satisfies $\r=\r^\da$ and that $\r$ has this property if and only if the vector $\t=U^{-1}\r$ has non-negative components. In general, we allow $t\equiv\sum_w t_w$ to be strictly smaller than one, in which case there is a non-zero probability that the player loses the game irrespective of the dice outcome. We likewise set $p^\da_x \equiv 0$ if $x>d$. 

Finally, we say that $\p$ majorizes $\q$ and write $\q \precsim \p$ if and only if 
\be\label{a22}
\pr_\t(\q)\leq\pr_\t(\p)\;.
\ee 
for all (possibly incomplete) distributions $\{t_w\}$. This can be interpreted as stating that $\q \precsim \p$ if the player will \emph{always} choose the $\p$-dice over the $\q$-dice for any gambling game.

One can additionally consider games with resources, where a player may be provided with an additional $\s$-dice. 
The probability to win such a $w$-gambling game with resources is given by $\|\p\otimes\s\|_{(w)}$, such that a $\p$-dice has less uncertainty than a $\q$-dice if $\|\p\otimes\s\|_{(w)}\geq \|\q\otimes\s\|_{(w)}$ for all $w$ and for all $\s$. However, this condition is equivalent to $\q\precsim \p$, so the class of $k$-games are sufficient to induce the pre-order that is generated by the larger class of games with resources. 

{\it Conditional Majorization: Games with a correlated source.} Here we consider a game in which the host rolls a dice with two outcomes $x$ and $y$. The host sends the value of $x$ to the player, and the value $y$ is kept hidden from the player. The players knows the distribution $\{p_{xy}\}$ from which $x$ and $y$ are sampled, and the player's goal is to guess the value of $y$.
Our goal is to construct all possible gambling games that incorporate a correlated source, so we allow the player to choose a value $z$ and then have the host select $w$ from  a conditional distribution $\mT\eqdef\{t_{w|z}\}$ after receiving the value $z$ from the player. We denote the player's choice of $z$ with a function $z=f(x)$. In general, the player will choose $z$ based on their knowledge of $x$, as well as the fixed distributions $\{p_{xy}\}$ and $\{t_{w|z}\}$.  
In Fig.~\ref{classical1} we depict such a $\mathcal{T}$-gambling game.

\begin{figure}[h]
\large
\begin{overpic}[scale=.78]{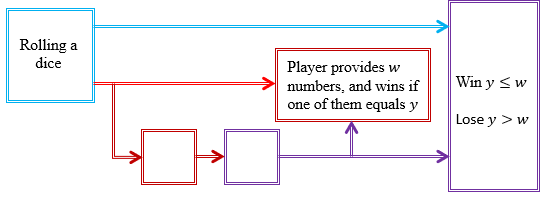}
\put(114.5,15.5){$\mathcal{T}$}
\put(75,15.5){$f$}
\put(110,59){$x$}
\put(195,88){$y$}
\put(94,25){$z$}
\put(145,25){$w$}
\end{overpic}
 \caption{\linespread{1}\selectfont{\small Classical gambling game with a correlated source. The player is provided with the value $x$. Based on this value, the player choose $z$ (or the function $f$) and send it to the host. The host then chooses the $w$ game based on a (possibly incomplete) distribution matrix $\mathcal{T}=(t_{w|z})$. The player will then guess $\{1, ..., w\}$ and will win the game if $y\leq w$.}} 
  \label{classical1}
\end{figure}

Let $P=(p_{xy})$ be the $m\times n$ probability matrix, and w.l.o.g. suppose $P = P^{\downarrow}$ such that
\be
p_{x1}\geq p_{x2}\geq\cdots\geq p_{xn}\quad\forall\;x=1,...,m.
\ee
For a given $x$ and $z$, the probability to win the game can be expressed as
$\r_z\cdot\p_x$, where $\{\p_x\}$ are the rows of $P$ and $\r_z\equiv U\t_z$, where $\{\t_z\}$ are the columns of the $\ell\times q$ matrix $\mathcal{T} =(t_{w|z})$.
As before, we only require that $\sum_{w=1}^{\ell}t_{w|z}\leq 1$ for all $z=1,...,q$. 
Therefore, the optimal probability to win a $\mathcal{T}$-game is given by
\be\label{pt}
\pr_\mathcal{T}(P)=\sum_{x=1}^{m}\max_{z}\r_{z}\cdot\p_x\;.
\ee
We are now ready to compare between two dice, a $P$-dice and a $Q$-dice, and call this comparison conditional majorization.

\begin{definition}\label{defconmaj}
Let $P=(p_{xy})$ be an $m\times n$ probability matrix, and $Q=(q_{x'y'})$ be an $m'\times n'$ probability matrix. 
We say that $P$ conditionally majorizes $Q$ and write
\be
Q\precsim_cP\quad\text{if and only if}\quad \pr_\mathcal{T}(Q)\leq \pr_\mathcal{T}(P)
\ee
for all (column) sub-stochastic matrices $\mathcal{T}$. \footnote[1]{
The term `conditional majorization' was first introduced in~\cite{Gour}, however it was not defined in an operational way as with the games of chance. }
\end{definition}

\begin{theorem}\label{classical8}
Let $P$ and $Q$ be two $m\times n$ column stochastic matrices. Then,
\be\label{class0}
Q\precsim_cP\quad\iff\quad
Q=\sum_zS_{z}PV_z
\ee 
where each $S_z$ is a sub-stochastic matrix such that $\sum_zS_z$ is a column stochastic matrix (i.e. a classical channel), and each $V_z$ is a permutation matrix.
\end{theorem}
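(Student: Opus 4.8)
The plan is to recast the theorem as a convex-membership statement. I would introduce the set
\[
\mathcal{C}(P)\cdef\Big\{\textstyle\sum_z S_zPV_z \ :\ S_z\geq 0\ \text{sub-stochastic},\ \textstyle\sum_z S_z\ \text{column stochastic},\ V_z\ \text{permutation}\Big\},
\]
and note that, since there are only $n!$ permutation matrices of size $n$, the sum may be taken over this finite index set, so that $\mathcal{C}(P)$ is the image of the compact convex set of admissible tuples $(S_z)_z$ under a linear map and is therefore itself compact and convex. The theorem then becomes $Q\precsim_c P\iff Q\in\mathcal{C}(P)$, which I would establish in two directions.

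For the direction $Q\in\mathcal{C}(P)\Rightarrow Q\precsim_c P$ I would substitute the decomposition into $\pr_\mathcal{T}(Q)=\sum_x\max_z\r_z\cdot\q_x$ and expand each row $\q_x$ as a nonnegative combination of permuted rows of $P$; this turns each inner product $\r_z\cdot\q_x$ into a nonnegative combination of terms $(V^{T}\r_z)\cdot\p_{x'}$ over the permutations $V$ and source rows $x'$ appearing in the decomposition. Every game vector $\r_z=U\t_z$ is sorted non-increasingly (because $\t_z\geq 0$) and the rows of $P$ are sorted by hypothesis, so the rearrangement inequality yields $(V^{T}\r_z)\cdot\p_{x'}\le \r_z\cdot\p_{x'}\le\max_{z'}\r_{z'}\cdot\p_{x'}$. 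Summing these bounds against the decomposition weights and using that $\sum_z S_z$ has unit column sums collapses the estimate exactly to $\pr_\mathcal{T}(P)$. This is the routine direction.

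The converse is the main obstacle, and I would argue it by contraposition with a separating hyperplane. If $Q\notin\mathcal{C}(P)$, compactness and convexity produce a matrix $R$ with $\sum_{xy}R_{xy}Q_{xy}>\max_{X\in\mathcal{C}(P)}\sum_{xy}R_{xy}X_{xy}$. The key step is evaluating the support function: expanding $\sum_{xy}R_{xy}(\sum_zS_zPV_z)_{xy}$ and optimizing, the permutation optimization is solved by the rearrangement inequality while the sub-stochastic optimization concentrates all weight on the best source row, giving
\[
\max_{X\in\mathcal{C}(P)}\sum_{xy}R_{xy}X_{xy}=\sum_{x'}\max_x \tilde{\r}_x\cdot\p_{x'},
\]
where $\tilde{\r}_x$ denotes the row $R_{x,\cdot}$ rearranged non-increasingly. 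The right-hand side is exactly $\pr_\mathcal{T}(P)$ for the game whose columns are the vectors $\tilde{\r}_x$.

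To close the argument I must verify that this $\mathcal{T}$ is a genuine sub-stochastic game, which is where the column-stochastic hypothesis enters. Because every $X\in\mathcal{C}(P)$ and $Q$ share the same total entry-sum $n$, replacing $R$ by $R+c\mathbf{1}$ shifts both sides of the strict inequality by the same amount and lets me assume $R\ge 0$; a positive rescaling then forces every entry $\le 1$. Each sorted row $\tilde{\r}_x$ is then non-increasing, nonnegative, with largest entry $\le 1$, which is precisely the condition that $U^{-1}\tilde{\r}_x\ge 0$ have column sum at most one, i.e.\ that $\mathcal{T}$ be sub-stochastic. A final application of the rearrangement inequality gives $\pr_\mathcal{T}(Q)\ge\sum_x\tilde{\r}_x\cdot\q_x\ge\sum_{xy}R_{xy}Q_{xy}$, so $\pr_\mathcal{T}(Q)>\pr_\mathcal{T}(P)$, contradicting $Q\precsim_c P$. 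The delicate points throughout are the exact evaluation of the support function and the normalization of the separating functional $R$ into an admissible game, with the rearrangement inequality serving as the recurring workhorse.
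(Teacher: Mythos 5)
Your proof is correct, but it takes a genuinely different route from the paper's. The paper factors the argument through two lemmas imported from the conditional-uncertainty-principle work: first, the decomposition $Q=\sum_z S_z PV_z$ is shown (via row-wise majorization and Birkhoff's theorem) to be equivalent to the existence of a single column-stochastic $S$ with $QU\le SPU$ entry-wise; second, the existence of such an $S$ is characterized by Farkas' lemma as the condition $\sum_x\max_z\p_x\cdot\r_z\ge\sum_x\q_x\cdot\r_x$ for all families of sorted non-negative vectors, which is then matched against $\pr_\mathcal{T}$. You instead work directly with the compact convex set $\mathcal{C}(P)$ of reachable matrices and run a separating-hyperplane argument, computing the support function in closed form; there the rearrangement inequality (optimizing the $V_z$) and the concentration of each column of $\sum_z S_z$ on the best source row do in one step what Birkhoff plus LP duality do in the paper. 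The two arguments are of course both convex duality at heart, but yours is more self-contained, explicitly proves the easy direction (which the paper's main-text proof leaves implicit in its lemmas), and makes transparent exactly why sub-stochastic games suffice as separating functionals, via the shift-by-a-constant and rescaling normalization that exploits the fixed total entry-sum $n$ of everything in $\mathcal{C}(P)$. What the paper's route buys in exchange is the intermediate characterization $QU\le SPU$, which is independently useful and reused elsewhere. Two small points to tighten in your writeup: when you claim the support function equals $\sum_{x'}\max_x\tilde\r_x\cdot\p_{x'}$ you should note that achieving it requires one index $z$ per (target row, sorting permutation) pair so that $\sum_z S_z$ remains genuinely column stochastic; and the game realizing this value has columns $U^{-1}\tilde\r_x$ rather than $\tilde\r_x$ itself, though you do address exactly this when verifying sub-stochasticity.
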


\begin{remark}

This theorem provides an operational characterization for conditional majorization.\footnote[2]{For convenience we assume that the two matrices have the same dimensions, as if this is not the case one can add zero rows and columns to make them the same dimension. } It states that conditional majorization is equivalent to a relation induced by a conditional random relabeling map; see Fig~\ref{relabeling}.
\end{remark}

\begin{figure}[h]
   \includegraphics[width=0.5\textwidth]{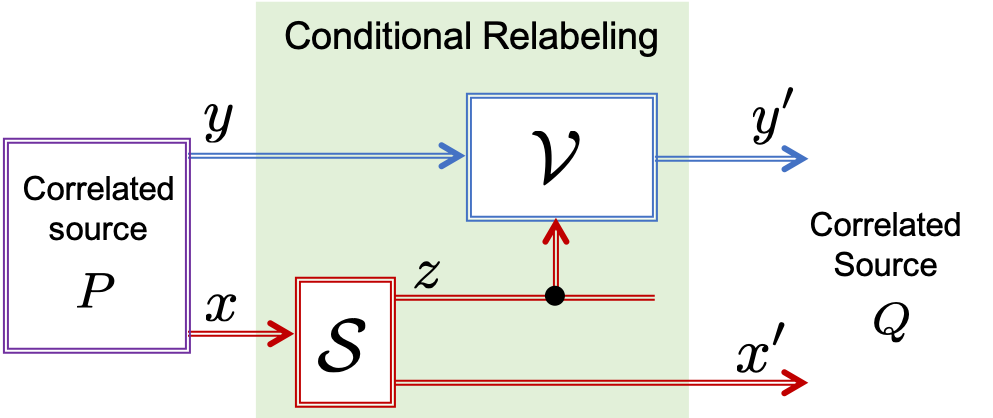}
  \caption{\linespread{1}\selectfont{\small The action of conditional random relabeling map on a correlated source $P=(p_{xy})$ yields the correlated source $Q=(q_{x'y'})=\sum_zS_zPV_z$.}} 
  \label{relabeling}
\end{figure}

\begin{proof}
The proof follows from the following two lemmas that were proven in~\cite{Gour}. For completeness, we provide the proof of these lemmas in appendix A of the supplementary material.
\begin{lemma}~\cite{Gour}
The matrices $P$ and $Q$ are related as in the RHS of~\eqref{class0} if and only if there exists a column stochastic matrix $S$ such that
\be\label{npu2}
QU\leq SPU\;,
\ee
where the inequality is entry-wise and $U$ is the upper triangular matrix 
\end{lemma}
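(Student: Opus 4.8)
The plan is to prove the two directions separately, using that $U$ is invertible and that $(AU)_{xy}=\sum_{k\le y}a_{xk}$ is the partial sum of the first $y$ entries of row $x$; since the rows are arranged in non-increasing order these partial sums are exactly the Ky--Fan norms, so the entrywise relation $QU\le RU$ between two row-sorted matrices is precisely the statement that each row of $Q$ is weakly majorized by the corresponding row of $R$ (note $(QU)_{xy}\le (RU)_{xy}\le\|\p_x^R\|_{(y)}$, so the unsorted domination still forces weak majorization of the sorted rows).

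For the ``only if'' direction I would assume $Q=\sum_zS_zPV_z$. The key elementary fact is that for any permutation $V$ one has $PVU\le PU$ entrywise: the $(x,y)$ entry of $PVU$ is the sum of some $y$ entries of the non-increasing row $\p_x$, which can never exceed the sum of its $y$ largest entries $(PU)_{xy}$. Left-multiplication by the entrywise-nonnegative $S_z$ preserves the inequality, so $S_zPV_zU\le S_zPU$; summing over $z$ gives $QU=\sum_zS_zPV_zU\le(\sum_zS_z)PU=SPU$ with $S\eqdef\sum_zS_z$ column stochastic.

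For the ``if'' direction I would first reduce to a normal form. Set $R\eqdef SP$, which is again column stochastic and satisfies $QU\le RU$. If one can write $Q=\sum_zD_zRV_z$ with each $D_z$ sub-stochastic and $\sum_zD_z$ column stochastic, then setting $S_z\eqdef D_zS$ finishes the proof: a sub-stochastic times a column-stochastic matrix is sub-stochastic, and $\sum_zS_z=(\sum_zD_z)S$ is a product of two column-stochastic matrices, hence column stochastic. Thus it suffices to treat the case $P=R$ column stochastic and $S=I$, i.e.\ to realize a column-stochastic $Q$ as a conditional permutation-mixture of a column-stochastic $R$ given only the partial-sum domination $QU\le RU$.

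The remaining core is the hard part, and where I expect the real obstacle to lie. Read row by row, the hypothesis provides weak majorization of each row of $Q$ by the corresponding row of $R$, and for a single vector weak majorization is implemented by a doubly sub-stochastic map, i.e.\ a sub-convex combination of (sub-)permutations built from elementary Robin--Hood transfers. The difficulty is that a single column permutation $V_z$ acts on \emph{all} rows of $R$ simultaneously, so one cannot rearrange each row independently and the conditional (two-index) structure must be respected. My plan is to proceed constructively by induction on the number of strict inequalities in $QU\le RU$, at each step applying one elementary transfer on the columns (a convex combination of the identity and a transposition, i.e.\ a permutation $V_z$) and using the row-mixing freedom in $D_z$ to localize that transfer to the rows that require it, while the sub-stochastic slack absorbs the weak (inequality) part. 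The delicate bookkeeping to be checked is that the $D_z$ so produced remain sub-stochastic and that $\sum_zD_z$ closes up to an exactly column-stochastic matrix rather than merely a sub-stochastic one; reconciling these stochasticity constraints with the column-sum equalities of $Q$ and $R$ is the crux of the argument.
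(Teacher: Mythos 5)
Your ``only if'' direction is correct and is essentially a streamlined version of the paper's argument: the paper routes through the observation that $\p_x D_{wx}\prec\p_x$ for the doubly stochastic mixtures $D_{wx}=\sum_z (s_{wz|x}/s_{w|x})V_z$, whereas you note directly that $(PVU)_{xy}$ is a sum of some $y$ entries of the non-increasing row $\p_x$ and hence at most $(PU)_{xy}$; both give $QU=\sum_z S_zPV_zU\le SPU$.

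The converse, however, is left genuinely unproved. You correctly reduce to realizing $Q=\sum_z D_zRV_z$ from $QU\le RU$, correctly identify this as the crux, and then only sketch an inductive ``elementary transfer'' scheme whose essential stochasticity bookkeeping you explicitly defer. The obstacle you flag --- that a single $V_z$ permutes the columns of all rows simultaneously, so rows cannot be rearranged independently --- is illusory, and seeing why closes the gap in a few lines. Row $w$ of $QU\le SPU$, together with the total-sum equality forced by stochasticity (the last column of $U$ compares row sums, and the grand totals of $Q$ and $SP$ agree), gives $\q_w\prec\r_w\eqdef\sum_x s_{w|x}\p_x$, with $\r_w$ automatically non-increasing. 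By Hardy--Littlewood--P\'olya there is, \emph{for each $w$ separately}, a doubly stochastic $D_w$ with $\q_w=\r_wD_w$; Birkhoff then gives $D_w=\sum_z c_{z|w}V_z$ over one common enumeration $\{V_z\}$ of all permutations. Setting $(S_z)_{wx}\eqdef c_{z|w}\,s_{w|x}$, row $w$ of $S_zPV_z$ equals $c_{z|w}\,\r_wV_z$, so summing over $z$ reproduces $\q_w$; each $S_z$ is sub-stochastic since $c_{z|w}\le1$, and $\sum_zS_z=S$ since $\sum_zc_{z|w}=1$. The $w$-dependence of the Birkhoff weights $c_{z|w}$, absorbed into the $S_z$, is exactly the ``row-localization'' freedom you were trying to engineer by hand with transfers; no induction is needed. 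This is the paper's route, and without it (or an actual completion of your transfer argument) your proof of the ``if'' direction does not stand.
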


\begin{lemma}\cite{Gour}
There exists a column stochastic matrix $S$ that satisfies~\eqref{npu2} if and only if for any set of  $m$ vectors $\r_1,...,\r_m\in\mbb{R}_{+}^{n}$ whose components are arranged in non-decreasing order,
 \be
\sum_{x=1}^{m}\max_z\p_{x}\cdot\r_z\geq \sum_{x=1}^{m}\q_x\cdot\r_x\;.
\ee
\end{lemma}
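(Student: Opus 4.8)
The plan is to read both sides of the claimed equivalence as two faces of a single finite linear feasibility problem and to bridge them with strong linear-programming duality (a theorem of alternatives). Throughout, let $\p_x,\q_x$ denote the rows of $P,Q$, and recall from~\eqref{npu2} that the condition ``there exists a column stochastic $S$ with $QU\le SPU$'' is the entrywise linear system in the entries $S_{xx'}$: find $S\ge 0$ with $\sum_x S_{xx'}=1$ for every $x'$ and with $\sum_{x'}S_{xx'}(PU)_{x'}\ge (QU)_x$ (entrywise) for every $x$, where $(PU)_{x'}$ and $(QU)_x$ are the corresponding rows.

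First I would perform the change of variables that rewrites the test inequality in terms of $PU$ and $QU$. A nonnegative vector $\r$ with monotonically ordered components is exactly one of the form $\r=U\t$ with $\t\ge 0$ (the triangular pattern of $U$ fixes the direction of the ordering to match the hypothesis, and the correspondence is a bijection). Since $\p_x\cdot(U\t)=(PU)_x\cdot\t$ and likewise for $\q_x$, the inequality in the lemma is equivalent to
\be
\sum_{x=1}^m \max_z (PU)_x\cdot\t_z \;\ge\; \sum_{x=1}^m (QU)_x\cdot\t_x\qquad\text{for all } \t_1,\dots,\t_m\ge 0. \tag{$\star$}
\ee
This reparametrization is precisely what injects the matrix $U$ of~\eqref{npu2}; from here I work only with $(\star)$ and the feasibility of $S$. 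The easy direction (``$S$ exists $\Rightarrow(\star)$'') is a direct substitution: pairing $\sum_{x'}S_{xx'}(PU)_{x'}\ge (QU)_x$ with $\t_x\ge 0$, using $(PU)_{x'}\cdot\t_x\le \max_z (PU)_{x'}\cdot\t_z$, summing over $x$, exchanging the order of summation, and collapsing $\sum_x S_{xx'}=1$ yields exactly $(\star)$.

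The substantive direction is the converse, and this is where the effort goes. I treat the feasibility system above as a finite LP in the $S_{xx'}$ and apply the theorem of alternatives (Farkas/Motzkin): the system is infeasible if and only if there exist nonnegative multipliers $\t_x\ge 0$ for the $\ge$-inequalities and free multipliers $\lambda_{x'}\in\mbb{R}$ for the column-sum equalities satisfying, coordinatewise in each $S_{xx'}\ge 0$, the dual-feasibility condition $(PU)_{x'}\cdot\t_x+\lambda_{x'}\le 0$ while the aggregate $\sum_x (QU)_x\cdot\t_x+\sum_{x'}\lambda_{x'}$ is strictly positive. The dual-feasibility condition reads $\lambda_{x'}\le -\max_x (PU)_{x'}\cdot\t_x$, so the largest admissible certificate is obtained by setting $\lambda_{x'}=-\max_x (PU)_{x'}\cdot\t_x$; substituting and relabeling $x'\to x$ turns ``certificate positive'' into $\sum_x (QU)_x\cdot\t_x>\sum_x \max_z (PU)_x\cdot\t_z$, which is exactly a violation of $(\star)$. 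Taking the contrapositive gives $(\star)\Rightarrow$ feasibility, completing the equivalence.

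I expect the main obstacle to be the bookkeeping of this dualization: installing the theorem of alternatives with the correct signs (nonnegative multipliers for the $\ge$-inequalities, free multipliers for the equalities) and, above all, eliminating the free multipliers $\lambda_{x'}$ at their optimal value so that the $\max_z$ term emerges in precisely the shape appearing in $(\star)$. A secondary point to check carefully is the change-of-variables bijection between nonnegative $\t$ and the ordered vectors $\r$ of the hypothesis, whose orientation is dictated by the triangular structure of $U$; no compactness argument is needed, since the feasibility problem is polyhedral and the Farkas alternative applies verbatim.
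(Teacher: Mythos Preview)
Your proposal is correct and follows essentially the same route as the paper: both arguments cast the existence of a column-stochastic $S$ satisfying~\eqref{npu2} as a polyhedral feasibility problem and invoke a Farkas-type alternative, then use the bijection $\r=U\t$ between nonnegative vectors and monotone vectors to translate the dual certificate into the inequality of the lemma. The paper stacks the constraints into one large matrix $A$ and applies Farkas directly, whereas you keep the equality constraints separate with free multipliers $\lambda_{x'}$ and eliminate them at their optimum (which is exactly how the paper's auxiliary vector $\r$ appears and is set to $r_w=\max_x\p_w\cdot(U\t_x)$); these are equivalent packagings of the same duality step.
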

We now prove the theorem.
If $Q\precsim_c P$ then from~\eqref{pt} and Definition~\ref{defconmaj} we get
\be
\sum_{x=1}^{m}\max_z\p_{x}\cdot\r_z \geq \sum_{x=1}^{m}\max_z\q_{x}\cdot\r_z\geq \sum_{x=1}^{m}\q_{x}\cdot\r_x\;.
\ee
Hence, from  the above two lemmas  it follows that $P$ and $Q$ are related as in the RHS of~\eqref{class0}. 
 \end{proof}

{\it Channel Majorization: Games with a classical channel.} We now identify the class of gambling games corresponding to the notion of entropy of a channel. Roughly speaking,  our aim is to identify games in which the player has a lower probability of winning the game with a noisier channel. We will consider a classical channel $\mM$ with transition matrix $P=(p_{y|x})$, and we denote the space of channels from system $A$ to system $B$ as $\mathfrak{L}(\text{A} \rightarrow \text{B})$. The goal of the game is for the player to correctly guess the value $y$ at the output of the channel. 

In the most general settings, the Host does not provide the player the full information about $w$ at the early stage of the game. Instead, the player receives a number $z$ that is sampled from a distribution $\{t_{wz}\}$ with $w=1,...,m$ and $z=1,...,\ell$.
The player knows the $m\times\ell$ probability matrix $T=(t_{wz})$. Based on this partial information about which $w$-game will be played later on, the player will choose the optimal value of $x$ to send through the channel. Finally, the host draws the value of $w$ and the player wins if $y \leq w$.
Such a $T$-gambling game with channel $\mM$ is then depicted in Fig.~\ref{classical1}.

\begin{figure}[h]
\large
\begin{overpic}[scale=.78]{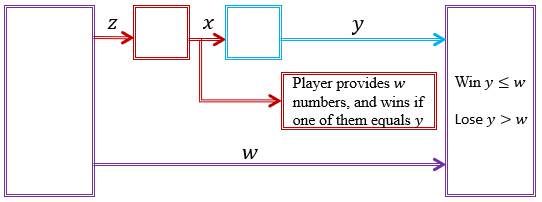}
\put(20,48){$T$}
\put(112,75){$\mathcal{M}$}
\put(71,75){$\mathcal{E}$}
\end{overpic}
\caption{\linespread{1}\selectfont{\small A classical gambling game with a channel. The host provides the player with a value $z$ that is drawn from a probability distribution $T=(t_{wz})$. The player chooses an input to the channel $x=f(z)$ that may depends of the value of $z$.
The host then selects $w$, and the player wins if $y\leq w$ since they will always provide the first $w$ numbers with the highest corresponding probabilities.}} 
  \label{channel1}
\end{figure}

Let $P=(p_{y|x})=P^{\da}$ be the ordered transition matrix of the channel $\mM$, in which the columns of $P$ are arranged in non-increasing order.
For a given choice of $x$ and $z$ the probability that the player wins the game is given by
\be
\sum_{w=1}^{m}t_{w|z}\sum_{y=1}^{w}p_{y|x}=\sum_{y=1}^{m}\sum_{w=y}^{m}t_{w|z}p_{y|x}\equiv  \frac{\r_z\cdot\p_{x}}{|\mathbf{t}_{z}|}
\ee
where $\r_z\equiv U\t_z$ (with $\{\t_z\}$ being the columns of $T$), and $\{\p_x\}$ are the columns of the transition matrix $P$; and in particular, each $\p_x$ is a probability vector. 
For each value of $z$ the player will choose $x$ (i.e. $f(z)$) such that $\r_z\cdot\p_x=\max_{x'}\r_z\cdot\p_{x'}$. We therefore conclude that the optimal probability to win a $T$-gambling game with a classical channel $\mM=P$ is given by
\be
\pr_{T}(\mM)=\sum_{z=1}^{\ell}\max_{x}\;\r_z\cdot\p_{x}
\ee
Note that the above quantity is the dual of~\eqref{pt} in the sense that the role of $x$ and $z$ are flipped. Unlike~\eqref{pt}, here $\p_x$ is a probability vector for each $x$.

\begin{definition}\label{defchanmaj}
Let $\mM$ and $\mN$ be two classical channels with corresponding transition matrices $P=(p_{y|x})$ and
$Q=(q_{y'|x'})$. 
We say that the channel $\mN$  majorizes $\mM$ and write
\be
\mM\precsim \mN\quad\text{if and only if}\quad \pr_T(\mM)\leq \pr_T(\mN)
\ee
for all (column) sub-stochastic matrices $T$.
\end{definition}

We now provide the following characterization of channel majorization.
\begin{theorem}\label{cun}
Let $\mM^{X \to Y}$ and $\mN^{X'\to Y'}$ be two classical channels. Then, $\mM\precsim \mN$ if and only if there exists a channel 
$\mS^{X \to X'W'}$ and a controlled isometry $\mV^{W'Y' \to Y}$ (i.e. for each $w'$, $\mV^{W'Y' \to Y}(w',y')$ is an injective function of $y'$) such that
\be\label{rel0}
\mM^{X \to Y}=\mV^{W'Y' \to Y}\circ\mN^{X'\to Y'}\circ\mS^{X \to X'W'}
\ee
(see Fig.~\ref{channel2}).
\end{theorem}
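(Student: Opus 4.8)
The plan is to prove the two implications separately, handling the sufficiency of the decomposition by a short convexity argument and devoting the real work to the converse.

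For the easy direction, suppose $\mM=\mV\circ\mN\circ\mS$. Writing $\mS$ as the column-stochastic matrix $s_{(x',w')|x}$ and $\mV$ as the controlled injection $g_{w'}$, each output column of $\mM$ is $\p_x=\sum_{x',w'}s_{(x',w')|x}\,V_{w'}\q_{x'}$, a convex combination (the weights sum to one since $\mS$ is a channel) of injective relabelings $V_{w'}\q_{x'}$ of the columns of $\mN$. For each game column I introduce the functional $\Phi_z(\v)\eqdef\sum_w t_{w|z}\|\v\|_{(w)}$, which is symmetric and convex. Since an injective relabeling only permutes the nonzero entries and appends zeros, it leaves every Ky--Fan norm invariant, so $\Phi_z(V_{w'}\q_{x'})=\Phi_z(\q_{x'})$. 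Convexity then gives $\Phi_z(\p_x)\le\sum_{x',w'}s_{(x',w')|x}\Phi_z(\q_{x'})\le\max_{x'}\Phi_z(\q_{x'})$, and maximizing over $x$ and summing over $z$ yields $\pr_T(\mM)\le\pr_T(\mN)$ for every sub-stochastic $T$, i.e. $\mM\precsim\mN$.

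For the converse I would first decouple the definition. Because $\pr_T(\mM)=\sum_z\max_x\r_z\cdot\p_x$ splits over the independent columns $\r_z=U\t_z$, the relation $\mM\precsim\mN$ holds if and only if the single-vector inequality $\max_x\r\cdot\p_x\le\max_{x'}\r\cdot\q_{x'}$ holds for every non-increasing $\r\ge 0$ (single-column $T$ realize all such $\r$ up to a positive scaling, under which both sides are invariant; the general case follows by summation). Fixing an input $x$ and substituting $\t=U^{-1}\r\ge 0$, this reads $\langle\t,L(\p_x)\rangle\le\max_{x'}\langle\t,L(\q_{x'})\rangle$ for all $\t\ge 0$, where $L(\v)\eqdef(\|\v\|_{(1)},\dots,\|\v\|_{(m)})$ is the Lorenz vector. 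A support-function (LP-duality / separating-hyperplane) argument in the positive orthant then upgrades this to the existence of convex weights $\{\lambda^{(x)}_{x'}\}$ with $L(\p_x)\le\sum_{x'}\lambda^{(x)}_{x'}L(\q_{x'})$ entrywise, the channel analogue of the inequality $QU\le SPU$ appearing in the first lemma.

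The last step converts this entrywise bound into the decomposition \eqref{rel0}. Here I would use that all columns of $Q$ are co-sorted (we assumed $Q=Q^\da$): their mixture $\bar\q^{(x)}\eqdef\sum_{x'}\lambda^{(x)}_{x'}\q_{x'}$ is itself non-increasing, so mixing commutes with the Lorenz map and $\sum_{x'}\lambda^{(x)}_{x'}L(\q_{x'})=L(\bar\q^{(x)})$. As $\p_x$ and $\bar\q^{(x)}$ are probability vectors, the entrywise bound with equality at $w=m$ is exactly the majorization $\p_x\prec\bar\q^{(x)}$. By Hardy--Littlewood--P\'olya and Birkhoff--von Neumann, $\p_x=\sum_k\mu^{(x)}_k\Pi_k\bar\q^{(x)}$ for permutations $\Pi_k$ and a probability vector $\mu^{(x)}$; distributing the permutations over the mixture gives $\p_x=\sum_{k,x'}\mu^{(x)}_k\lambda^{(x)}_{x'}\,\Pi_k\q_{x'}$. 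Reading off $\mS$ as the channel with weights $s_{(x',k)|x}=\mu^{(x)}_k\lambda^{(x)}_{x'}$ (column-stochastic by construction) and $\mV$ as the controlled permutation $V_k=\Pi_k$ (permutations are injective, hence a valid controlled isometry) realizes the claimed factorization.

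I expect the main obstacle to be the support-function step, where one passes from an ``$\langle\t,\cdot\rangle$ for all $\t\ge0$'' inequality to explicit mixing weights; the delicate point is that a mixture of Lorenz curves need not itself be a Lorenz curve, so the entrywise bound does not on its own certify a majorization relation. The co-sortedness of the columns of $Q$ is precisely what rescues this, converting the bound into the genuine majorization $\p_x\prec\bar\q^{(x)}$ from which the Birkhoff decomposition can be read off. The remaining bookkeeping (column sorting, and the transpose of roles relative to the conditional-majorization setup) is routine.
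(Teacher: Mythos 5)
Your proposal is correct and follows essentially the same route as the paper's proof: both reduce $\mM\precsim\mN$ via LP duality (Farkas / support functions) to an entrywise bound on the partial-sum (Lorenz) vectors $U^T\p_x$ against convex mixtures of the $U^T\q_{x'}$, and then convert that bound into the factorization~\eqref{rel0} by exploiting that the sorted columns mix to a sorted vector, so the bound is a genuine majorization to which Hardy--Littlewood--P\'olya and Birkhoff--von Neumann apply. The only differences are organizational --- you decouple the duality step over input symbols and phrase it as a per-column support-function argument rather than the paper's single joint Farkas feasibility LP, and your sufficiency direction via convexity and relabeling-invariance of the Ky-Fan functionals is a cleaner rendering of the paper's direct computation --- neither of which changes the substance.
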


\begin{proof}
See appendix B of the supplementary material for the complete proof. We note here that the proof additionally provides the first operational interpretation of the dynamical monotones introduced in~\cite{Gour2}. 
\end{proof}

\vspace{0 cm}
\begin{figure}[h]
\large
\begin{overpic}[scale=.74]{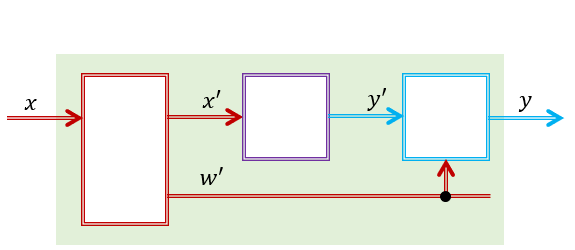}
\put(50,42){$\mathcal{S}$}
\put(121,56){$\mathcal{N}$}
\put(191,56){$\mathcal{V}_{w'}$}
\put(75, 95){The Channel $\mathcal{M}$}
\end{overpic}
\caption{\linespread{1}\selectfont{\small The simulation of $\mM$ with $\mN$ in the case that $\mM\precsim \mN$.}} 
  \label{channel2}
\end{figure}

We now provide a lemma showing that it is sufficient to consider only games with a fixed value of $z$. 
\begin{lemma}
The pre-order induced by the set of classical gambling games is unchanged if we restrict to the set of gambling games such that $\mathcal{T} = \mathbf{p}$ where $\mathbf{p}$ is a vector probability distribution. Equivalently, $\mathcal{M} \precsim \mathcal{N}$ if and only if $\text{Prob}_{\mathbf{p}}(\mathcal{M}) \leq \text{Prob}_{\mathbf{p}}(\mathcal{N})$ for all $\mathbf{p}$.
\end{lemma}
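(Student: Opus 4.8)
The plan is to prove the two directions of the equivalence; the easy direction is immediate and the content lies in the converse, which I reduce to a column-by-column argument using the additive form of $\pr_T$. For the easy direction, note that a single probability vector $\mathbf{p}$ is a one-column (column) sub-stochastic matrix, so $\pr_{\mathbf{p}}(\mM)\le\pr_{\mathbf{p}}(\mN)$ is just the defining inequality of $\mM\precsim\mN$ evaluated at $T=\mathbf{p}$. Hence $\mM\precsim\mN$ at once implies the restricted condition.

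For the converse, the key structural observation is that because the player may let the input depend on the received value (i.e.\ $x=f(z)$), the optimization over $x$ sits \emph{inside} the sum over $z$, so the game decouples across the columns of $T$. Writing $\t_z$ for the $z$-th column of an arbitrary $m\times\ell$ sub-stochastic matrix $T$ and $\r_z\defeq U\t_z$, the closed form gives
\be
\pr_T(\mM)=\sum_{z=1}^{\ell}\max_x\r_z\cdot\p_x=\sum_{z=1}^{\ell}\pr_{\t_z}(\mM)\;,
\ee
and identically for $\mN$. It therefore suffices to prove the single-column inequality $\pr_{\t}(\mM)\le\pr_{\t}(\mN)$ for every sub-stochastic column $\t$.

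The one remaining gap is that the hypothesis concerns normalized vectors, whereas a column $\t$ may carry total weight $|\t|\defeq\sum_w t_w<1$; I would close this by positive homogeneity. If $|\t|=0$ then $\r=U\t=0$ and both sides vanish. If $|\t|>0$ then $\mathbf{p}\defeq\t/|\t|$ is a probability vector and, since $U$ is linear and $\max_x(\lambda\r)\cdot\p_x=\lambda\max_x\r\cdot\p_x$ for $\lambda\ge0$, one obtains $\pr_{\t}(\mM)=|\t|\,\pr_{\mathbf{p}}(\mM)$ and $\pr_{\t}(\mN)=|\t|\,\pr_{\mathbf{p}}(\mN)$; the hypothesis $\pr_{\mathbf{p}}(\mM)\le\pr_{\mathbf{p}}(\mN)$ then gives the single-column inequality, and summing over $z$ recovers $\pr_T(\mM)\le\pr_T(\mN)$ for all sub-stochastic $T$, i.e.\ $\mM\precsim\mN$. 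I do not expect a genuine obstacle here: both the additive decomposition across columns and the degree-one homogeneity follow directly from the explicit expression for $\pr_T$, and the only subtlety is the degenerate zero-weight column, which is disposed of separately above.
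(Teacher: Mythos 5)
Your proposal is correct and follows essentially the same route as the paper: the paper likewise decomposes $\pr_{T}$ column-by-column by writing $t_{wz}=t_z\,t_{w|z}$ and factoring the nonnegative column weight $t_z$ out of the inner maximization, which is exactly your additivity-plus-homogeneity argument. Your write-up is slightly more explicit (you spell out the trivial direction and the degenerate zero-weight column), but there is no substantive difference.
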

\begin{proof}
See appendix C of the supplementary material for the complete proof. 
\end{proof}
Equipped with the above pre-order, we now provide an operational definition of the family of entropy functions. 
\begin{definition} (cf.~\cite{Gour3, Gour5})
A non-zero function $$H : \underset{\text{A}, \text{B}}{\bigcup} \mathfrak{L}(\text{A} \rightarrow \text{B}) \rightarrow \mathbb{R}$$ is a channel entropy if it satisfies the following two conditions: 
\begin{enumerate}
    \item It is monotonic under the channel majorization; i.e.
\begin{align*}
    H(\mathcal{M}) \geq H(\mathcal{N}) \ \ \text{if} \ \ \mathcal{M} \precsim \mathcal{N}
    \end{align*}
\item It is additive under tensor products; i.e. 
\be
H(\mN\otimes\mM)=H(\mN)+H(\mN)
\ee
for all $\mN\in\mathfrak{L}(A\to B)$ and $\mM\in\mathfrak{L}(A\to B)$.
\end{enumerate}
\end{definition}
\noindent \emph{Remark 2.} Due to Theorem~\ref{cun}, the above definition can be shown to be equivalent to the definition of entropy for classical channels previously outlined in~\cite{Gour3, Gour5} (see appendix D of supplementary material for details).
Note however that unlike the definitions in~\cite{Gour3, Gour5}, the monotonicity property above is motivated operationally by games of chance.

Finally, we prove that there is only one unique dynamical entropy function that reduces to the Shannon entropy on classical states. 

\begin{theorem}
Let $H$ be a classical dynamical entropy that reduces to the Shannon entropy, $H_{S}$, on classical states. Then for all $\mathcal{N} \in \mathfrak{L}({\text{A} \rightarrow \text{B}})$, 
\begin{align*}
    H(\mathcal{N}) &= \min_{x} H_{S}\Big(\mathcal{N}(\ket{x}\bra{x})\Big)
\end{align*}
where in the above we represent the state corresponding to $x$ in Dirac notation as $\ket{x}\bra{x}$. 
\end{theorem}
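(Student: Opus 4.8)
The plan is to prove the two inequalities $H(\mN)\le h^*$ and $H(\mN)\ge h^*$ separately, where I abbreviate $\p_x\eqdef\mN(\ket{x}\bra{x})$ for the output distribution on the deterministic input $x$ and set $h^*\eqdef\min_x H_S(\p_x)$. Throughout I use the defining properties of a classical dynamical entropy: monotonicity under channel majorization (Definition~\ref{defchanmaj}), additivity under tensor products, asymptotic continuity (part of the cited definition), and the hypothesis that $H$ reduces to $H_S$ on state-channels, i.e.\ channels with trivial input.

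The upper bound is immediate. For each fixed $x$, the state-channel $\p_x$ is obtained from $\mN$ by preparing $\ket{x}$ from the trivial system and applying $\mN$, so taking the post-processing to be the identity in Theorem~\ref{cun} (the controlled isometry $\mV=\id$) exhibits $\p_x\precsim\mN$. Monotonicity together with the reduction to Shannon then gives $H_S(\p_x)=H(\p_x)\ge H(\mN)$, and minimizing over $x$ yields $H(\mN)\le h^*$.

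For the lower bound $H(\mN)\ge h^*$ I would pass to the regularized picture via additivity, $H(\mN)=\tfrac1n H(\mN^{\otimes n})$, and compare $\mN^{\otimes n}$ to a replacer channel $\mR_{u_K}$ whose output is always the uniform distribution $u_K$ on $K=2^{n(h^*-\delta)}$ letters. First note that a replacer channel inherits the Shannon entropy of its fixed output: running the upper-bound argument in both directions (a state simulates its replacer and vice versa) gives $H(\mR_{u_K})=H_S(u_K)=\log K$. Next I construct a channel $\tilde{\mN}_n$ that is $\epsilon$-close to $\mN^{\otimes n}$ and satisfies $\tilde{\mN}_n\precsim\mR_{u_K}$. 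For the latter it suffices that every column of $\tilde{\mN}_n$ be majorized by $u_K$, equivalently that every entry be $\le 1/K$: then by Hardy--Littlewood--P\'olya each column equals $D\,u_K$ for a doubly stochastic $D$, which Birkhoff writes as a mixture of permutations, and this mixture is realized by a pre-processing $\mS$ that loads the mixing weights into a control register and a controlled permutation $\mV$, so $\tilde{\mN}_n=\mV\circ\mR_{u_K}\circ\mS\precsim\mR_{u_K}$. To produce such columns I invoke the asymptotic equipartition property: each column of $\mN^{\otimes n}$ is a product $\bigotimes_i\p_{x_i}$ with per-letter entropies $\ge h^*$, so all but $\epsilon$ of its mass sits on a typical set where each probability is $\le 2^{-n(h^*-\epsilon)}\le 1/K$; truncating to the typical set and flattening the small remainder gives an $O(\epsilon)$-close column with all entries $\le 1/K$. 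Doing this uniformly over all input strings (possible since the letter-distributions come from a finite set) defines $\tilde{\mN}_n$, whence $H(\tilde{\mN}_n)\ge H(\mR_{u_K})=n(h^*-\delta)$. Asymptotic continuity then gives $H(\mN^{\otimes n})\ge H(\tilde{\mN}_n)-\eta_n$ with $\eta_n\le c\,\epsilon\,n\log d_B+g(\epsilon)$; dividing by $n$, sending $n\to\infty$, and finally $\epsilon,\delta\to 0$ yields $H(\mN)\ge h^*$, which with the upper bound completes the proof.

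The main obstacle is entirely in the lower bound, and specifically in two coupled estimates: smoothing all exponentially many columns of $\mN^{\otimes n}$ simultaneously into distributions majorized by a single uniform $u_K$ while keeping the overall channel distance $O(\epsilon)$, and matching the AEP exponent $h^*$ to the alphabet size $K$ so that the asymptotic-continuity penalty $\eta_n$, after division by $n$, vanishes in the limit. The single-shot version of this comparison genuinely fails—$\mN$ need not be simulable from the replacer to its minimal-entropy column when that column does not majorize the others—so the regularization together with asymptotic continuity is the essential ingredient, consistent with the statement that this is the \emph{unique asymptotically continuous} such entropy.
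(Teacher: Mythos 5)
Your upper bound is correct and is essentially the paper's: preparing $\ket{x}\bra{x}$ as a pre-processing with trivial post-processing exhibits $\mathcal{N}(\ket{x}\bra{x})\precsim\mathcal{N}$, and monotonicity plus the reduction to Shannon gives $H(\mathcal{N})\leq\min_x H_S(\mathcal{N}(\ket{x}\bra{x}))$; this is exactly the content of $\overline{H}(\mathcal{N})=\log|B|-\underline{D}(\mathcal{N}\|\mathcal{R})$ in the paper. The genuine problem is in your lower bound: you list asymptotic continuity as ``part of the cited definition,'' but it is not. The definition of a channel entropy in the paper has exactly two axioms --- monotonicity under channel majorization and additivity under tensor products --- and the theorem is asserted for \emph{every} such $H$ reducing to Shannon. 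Asymptotic continuity appears only afterwards (Remark~3) as a \emph{derived} property of the unique answer $\min_x H_S(\mathcal{N}(\ket{x}\bra{x}))$, not as a hypothesis. Since your entire lower-bound argument hinges on the estimate $|H(\mathcal{N}^{\otimes n})-H(\tilde{\mathcal{N}}_n)|\leq c\,\epsilon\,n\log d_B+g(\epsilon)$, and you yourself identify asymptotic continuity as ``the essential ingredient,'' your proof establishes uniqueness only within the strictly smaller class of asymptotically continuous entropies. The paper avoids this by a different route: it sandwiches $\mathbb{D}=\log|B|-H$ between the minimal extension $\underline{D}$ and the regularized maximal extension $\overline{D}^{\mathrm{reg}}$ of the Kullback--Leibler divergence (which needs only monotonicity for the lower bound and additivity for the regularized upper bound), and then invokes the collapse $\underline{D}(\mathcal{N}\|\mathcal{R})=\overline{D}^{\mathrm{reg}}(\mathcal{N}\|\mathcal{R})$ proved in the cited reference. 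There the AEP-type work is done inside the extension functions themselves, so no continuity of $H$ is ever assumed. To repair your argument you would either have to add asymptotic continuity to the hypotheses (weakening the theorem) or replace the smoothing step by a comparison that $H$ respects exactly, which is what the maximal-extension machinery accomplishes.

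Two smaller points, assuming one did grant asymptotic continuity. First, your Birkhoff step is stated loosely: for a $K$-dimensional column, $Du_K=u_K$ for every doubly stochastic $D$, so ``each column equals $Du_K$'' would force all columns to equal $u_K$. What you actually need is that a distribution on an alphabet of size $m>K$ with all entries at most $1/K$ is a convex combination of images of $u_K$ under injections $[K]\hookrightarrow[m]$ (equivalently, pad $u_K$ with zeros before applying Hardy--Littlewood--P\'olya and Birkhoff); this is true and is exactly the structure that the controlled isometries $\mathcal{V}^{W'Y'\to Y}$ of Theorem~2 provide, but it should be said that way. Second, when you flatten the atypical mass you must check that the redistributed mass does not push any entry above $1/K$; spreading it over a disjoint padding block of the output alphabet handles this. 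Neither of these is fatal, but the missing-hypothesis issue above is.
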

\begin{proof}
See appendix E of the supplementary material for the complete proof. 
\end{proof}
\noindent \emph{Remark 3.} It follows from~\cite{Andreas_Winter} that $H(\mathcal{N})$ is asymptotically continuous (see appendix E of supplementary material for details). \\
{\it Conclusions.} In this work, we motivate the use of payoff functions from games of chance as a measure of uncertainty. From this, we introduce families of games of chance which give rise to three different partial orders: majorization, conditional majorization, and channel majorization. We showed that conditional majorization can be operationally interpreted as a relation induced by a conditional random relabeling map, while channel majorization can be operationally interpreted as a relation induced by applying an arbitrary pre-processing channel and (possibly correlated) random post-processing isometries. Finally, we find the only asymptotically continuous channel entropy. 

One natural extension of this work is to characterise the uncertainty of \emph{quantum} channels. Unlike the classical case, we expect the ordering induced by quantum gambling games to also take into account resources such as entanglement. An open question is whether there also exists a unique asymptotically continuous channel entropy in the quantum case. Finally, we demonstrated preliminary connections between our results and dynamical resource theories which can be further explored. 

{\it Acknowledgments.} The authors would like to thank Henry Pfister and Mark Wilde for helpful discussions.  GG and IG acknowledge support from the Natural Sciences and Engineering Research Council of Canada (NSERC). SB acknowledges support from the National Science Foundation (NSF) under Grant No. 1908730 and 1910571.  Any opinions, findings, conclusions, and recommendations expressed in this material are those of the authors and do not necessarily reflect the views of these sponsors.

{\it Author Contributions.} G.G. conceived the presented idea. S.B. extended the results. S.B. and I.G. verified all results and finalized the manuscript.

\bibliographystyle{ieeetr}


\appendix
\section{Proof of Theorem 1}
{\bf Theorem 1.} \emph{Let $P$ and $Q$ be two $m\times n$ column stochastic matrices. Then,
\be\label{class0}
Q\precsim_cP\quad\iff\quad
Q=\sum_zS_{z}PV_z
\ee 
where each $S_z$ is a sub-stochastic matrix such that $\sum_zS_z$ is a column stochastic matrix (i.e. a classical channel), and each $V_z$ is a permutation matrix.}

\begin{lemma}~\cite{Gour}
The matrices $P$ and $Q$ are related as in the RHS of~\eqref{class0} if and only if there exists a column stochastic matrix $S$ such that
\begin{align*}
QU \leq SPU 
\end{align*}
where the inequality is entry-wise and $U$ is the upper triangular matrix 
\end{lemma}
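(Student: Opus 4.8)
\emph{Overall plan.} I would prove the two directions separately, treating the forward implication (decomposition $\Rightarrow$ matrix inequality) as routine and concentrating the effort on the converse. Throughout I use the standing convention $P=P^\da$, so every row of $P$ is arranged in non-increasing order, which is what makes the cumulative-sum matrix $U$ interact cleanly with column permutations.

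\emph{Forward direction.} Assume $Q=\sum_z S_z P V_z$ with the stated properties. The elementary observation is that for any permutation matrix $V$ one has $PVU\le PU$ entrywise: indeed $(PVU)_{x,j}=\sum_{k=1}^{j}p_{x,\sigma(k)}$ is a sum of $j$ distinct entries of row $x$, and since that row is sorted it is maximized by the top $j$ entries $\sum_{k=1}^{j}p_{x,k}=(PU)_{x,j}$. Because each $S_z$ is entrywise non-negative, multiplication on the left preserves entrywise inequalities, so $S_zPV_zU\le S_zPU$. Summing over $z$ yields $QU=\sum_z S_zPV_zU\le\big(\sum_z S_z\big)PU=SPU$ with $S\cdef\sum_z S_z$ column stochastic, which is exactly the desired inequality.

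\emph{Converse direction.} Here I would show that both conditions cut out the same convex set of matrices $Q$. Let $\mathcal{C}(P)$ be the set of $Q$ admitting a decomposition of the stated form; since only finitely many permutations $V_z$ can occur and the admissible tuples $\{S_z\}$ form a compact convex set, $\mathcal{C}(P)$ is a compact convex polytope, and the forward direction shows $\mathcal{C}(P)\subseteq\mathcal{D}(P)\cdef\{Q:\exists\,S\text{ column stochastic such that } QU\le SPU\}$. To obtain the reverse inclusion I would compute the support function of $\mathcal{C}(P)$: maximizing a linear functional $\langle W,\,\cdot\,\rangle$ over $\mathcal{C}(P)$ is a linear program in the $\{S_z\}$, and using the single column-stochasticity constraint the optimum collapses to $\sum_{x'}\max_{z,x}\big(W(PV_z)^{T}\big)_{x,x'}$. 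I would then show that every $Q\in\mathcal{D}(P)$ has $\langle W,Q\rangle$ bounded by this same quantity, so that no hyperplane can separate $Q$ from $\mathcal{C}(P)$, forcing $\mathcal{D}(P)\subseteq\mathcal{C}(P)$.

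\emph{Main obstacle.} The hard part is controlling the test matrix $W$. To feed the hypothesis $QU\le SPU$ into the support-function bound I must pass to $\tilde W\cdef W(U^{-1})^{T}$, so that $\langle W,Q\rangle=\langle\tilde W,QU\rangle$; the inequality $QU\le SPU$ then gives $\langle\tilde W,QU\rangle\le\langle\tilde W,SPU\rangle$ only when $\tilde W\ge0$ entrywise. Proving that the separating functionals may be taken with this sign structure -- equivalently, that the extreme rays of the normal cone of $\mathcal{C}(P)$ correspond to non-negative $\tilde W$ with non-decreasingly sorted rows -- is the technical heart, and is precisely where the Ky-Fan/cumulative-sum structure of $U$ enters. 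An alternative constructive route is to extract the terms $S_zPV_z$ one at a time: the single-row, no-mixing case reduces to weak majorization $\q_x\prec_w\p_x$ and is settled by Hardy--Littlewood--P\'olya together with Birkhoff--von Neumann, while the general case requires carrying this out simultaneously across all rows, with the off-diagonal entries of $S$ coupling them. Maintaining the invariant $QU\le SPU$ for the residual after each extraction, and ensuring the $V_z$ remain full permutations while the weak-majorization slack is absorbed into the sub-stochasticity of the $S_z$, is the main bookkeeping difficulty.
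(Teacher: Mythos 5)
Your forward direction is correct and is, in substance, the paper's: the entrywise bound $PV_zU\le PU$ for a row-sorted $P$ is exactly the statement that a permutation (or doubly stochastic) mixture of a sorted row is weakly majorized by that row, and left-multiplying by the non-negative $S_z$ and summing gives $QU\le SPU$. If anything your matrix-level phrasing is slightly cleaner than the paper's row-by-row Ky-Fan argument.

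The converse, however, is not established, and this is where the real content lies. Your primary route (support functions) stalls exactly where you flag it: you never show that the separating functionals can be taken with the sign and monotonicity structure needed to exploit $QU\le SPU$, so the inclusion $\mathcal{D}(P)\subseteq\mathcal{C}(P)$ remains unproved. More importantly, your fallback ``constructive route'' misdiagnoses the difficulty and therefore does not see how to finish. There is no coupling between rows and no iterative extraction with residuals: the column stochastic $S$ in the hypothesis is \emph{given}, and for each fixed $w$ the inequality $QU\le SPU$ read along row $w$ says precisely that the sorted vector $\q_w$ is majorized by the (automatically sorted, since it is a non-negative combination of identically sorted rows) vector $\sum_x s_{w|x}\p_x$. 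Hardy--Littlewood--P\'olya then produces a single doubly stochastic $D_w$ with $\q_w=\bigl(\sum_x s_{w|x}\p_x\bigr)D_w$, Birkhoff--von Neumann writes $D_w=\sum_z c_{z|w}V_z$ over the universal list of permutation matrices, and setting $s_{wz|x}\cdef c_{z|w}s_{w|x}$ assembles matrices $S_z$ with $\sum_z S_z=S$. Each row is handled independently; the only global ingredient is indexing $z$ over all permutations so that the same family $\{V_z\}$ serves every row, and since the relevant total masses agree one gets genuine majorization, so no slack needs to be ``absorbed into sub-stochasticity.'' As written, your proposal identifies the right classical tools but carries neither route to completion, so the converse direction has a genuine gap.
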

To prove the lemma, denote by $\{\p_x\}_{x=1}^{n}$ and $\{\q_{w}\}_{w=1}^{n}$ the $n$ rows of the matrices $P$ and $Q$, and by $s_{wz|x}$ and $s_{w|x}$ the components of the matrices $S_z$ and $\sum_zS_z$. Then, the RHS of~\eqref{class0} can be expressed as:
\be\label{eqn:qn}
\q_{w}=\sum_{x=1}^{n}s_{w|x}\p_xD_{wx}\quad\text{where}\quad D_{wx}\equiv\sum_{z}\frac{s_{wz|x}}{s_{w|x}}V_z\;.
\ee
Since $D_{wx}$ is a doubly-stochastic matrix we have that for each $x$ and $w$, $\p_xD_{wx}\prec\p_x$. We therefore conclude that
if the matrices $Q$ and $P$ are related as in~\eqref{qn} then there must exists a column stochastic matrix $S=(s_{w|x})$ such that
\be\label{qxp}
\q_{w}\precsim \sum_{x=1}^{n}s_{w|x}\p_x\quad\forall w=1,...,n.
\ee
Conversely, if the relation above holds, then there exists $n$ doubly-stochastic matrices $D_{w}$ such that 
\be
\q_{w}= \sum_{x=1}^{n}s_{w|x}\p_xD_{w}\quad\forall w=1,...,n.
\ee
Expressing each $D_{w}=\sum_{z}c_{z|w}V_z$ as a convex combination of all permutation matrices we obtain the form~\eqref{class}. Finally, recall that for each $w$ 
the row vectors $\p_x=\p_x^{\downarrow}$ and $\q_{w}=\q_{w}^{\downarrow}$ so that~\eqref{qxp} is equivalent to 
\be
\q_{w}U\leq \sum_{x=1}^{n}s_{w|x}\p_xU\quad\forall w=1,...,n,
\ee
where the inequality is entry-wise. This completes the proof of the lemma.

The question of whether there exists such a column stochastic matrix $S$ that satisfies~\eqref{npu0} can be solved with linear programming. In the following lemma, which was proved in~\cite{Gour}, we show that the dual problem can be expressed in terms of sub-linear functionals.

\begin{lemma}\cite{Gour}
There exists a column stochastic matrix $N$ that satisfies~\eqref{npu0} if and only if for any set of  $n$ vectors $\v_1,...,\v_n\in\mbb{R}_{+}^{m}$ whose components are arranged in non-decreasing order,
 \be
\sum_{w=1}^{n}\max_x\p_{w}\cdot\v_x\geq \sum_{w=1}^{n}\q_w\cdot\v_w\;.
\ee

\end{lemma}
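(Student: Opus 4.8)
The plan is to recognize the claim as the LP-duality statement for a finite linear feasibility problem, and to read off the stated inequality as the condition that no separating dual certificate exists. By the preceding lemma everything reduces to deciding whether the entrywise system $QU \le SPU$ admits a column stochastic solution $S=(s_{w|x})$ (so $s_{w|x}\ge 0$ and $\sum_w s_{w|x}=1$ for every $x$). This is a feasibility LP over the compact polyhedral column-stochastic polytope with finitely many linear inequality constraints indexed by the pairs $(w,j)$, so a theorem of the alternative (Farkas' lemma, with no constraint-qualification caveats thanks to polyhedrality) applies directly. The bridge to the statement is the change of variables $\v=U^{T}\mathbf{c}$: since $U^{T}$ is lower-triangular all-ones, $(U^{T}\mathbf{c})_k=c_1+\cdots+c_k$, so $\mathbf{c}\ge 0$ is exactly the condition that $\v$ has non-decreasing nonnegative components. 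Under this bijection the dual multipliers of the $(w,j)$-constraints become the ordered test vectors $\v_w$, and one has the identities $(PU)_{x\cdot}\cdot\mathbf{c}_w=\p_x\cdot\v_w$ and $(QU)_{w\cdot}\cdot\mathbf{c}_w=\q_w\cdot\v_w$.

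For the easy direction, suppose a column stochastic $S$ with $QU\le SPU$ exists and fix ordered nonnegative vectors $\v_w=U^{T}\mathbf{c}_w$ with $\mathbf{c}_w\ge 0$. Dotting the $w$-th row of the entrywise inequality with $\mathbf{c}_w$ preserves the direction, giving $\q_w\cdot\v_w\le\sum_x s_{w|x}(\p_x\cdot\v_w)$. Summing over $w$ and interchanging the order of summation yields $\sum_w\q_w\cdot\v_w\le\sum_x\sum_w s_{w|x}(\p_x\cdot\v_w)$. For each fixed $x$ the inner sum is a weighted average of the numbers $\p_x\cdot\v_w$ against the probability weights $s_{\cdot|x}$, which sum to one by column-stochasticity, and is therefore bounded by $\max_w\p_x\cdot\v_w$. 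This produces exactly $\sum_w\q_w\cdot\v_w\le\sum_x\max_w\p_x\cdot\v_w$, the stated inequality (the displayed index names in the lemma differ from these only by relabeling the dummy summation and maximization indices).

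For the converse I would run the dual explicitly. Multiplying the $(w,j)$-constraint by a multiplier $c_{wj}\ge 0$, assembling $\mathbf{c}_w=(c_{wj})_j$, and summing shows that feasibility is equivalent to the assertion that for every choice of $\mathbf{c}_w\ge 0$ one has $\max_{S}\sum_{w,x}s_{w|x}\big[(PU)\mathbf{c}_w\big]_x\ge\sum_w(QU)_{w\cdot}\cdot\mathbf{c}_w$, the maximum taken over column stochastic $S$. The heart of the matter is evaluating this inner maximum: because the only coupling across $w$ is the per-column normalization $\sum_w s_{w|x}=1$, the optimal $S$ concentrates each column $x$ on the index $w$ maximizing $[(PU)\mathbf{c}_w]_x$, so the maximum equals $\sum_x\max_w[(PU)\mathbf{c}_w]_x$. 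Rewriting both sides through $\v_w=U^{T}\mathbf{c}_w$ turns this into precisely the stated inequality, and Farkas' lemma certifies that its validity for all ordered nonnegative $\{\v_w\}$ is equivalent to feasibility.

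The main obstacle is the converse's duality step: one must justify the theorem of the alternative for this particular feasibility system and, crucially, verify the max--sum interchange that collapses the optimization over the column-stochastic polytope into the per-column maximum $\sum_x\max_w(\cdots)$. Once that interchange and the $U^{T}$ change of variables are in place, both implications become the two sides of a single LP-duality equivalence, and the remaining bookkeeping (padding $P,Q$ to equal size, matching the ordering convention $P=P^{\da}$, and the harmless relabeling of dummy indices between this derivation and the displayed formula) is routine.
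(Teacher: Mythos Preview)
Your approach---Farkas/LP duality for the feasibility of $QU\le SPU$ over column-stochastic $S$, followed by a change of variables through $U$ to obtain the ordered test vectors---is exactly the paper's; the paper encodes the system as a single block inequality $\s A\ge\b$ and invokes Farkas on that, while you phrase the dual as a maximum over the column-stochastic polytope and collapse it column-by-column, which is the same computation in a different dress. One small bookkeeping slip: the identity $(PU)_{x\cdot}\cdot\mathbf{c}_w=\p_x\cdot\v_w$ forces $\v_w=U\mathbf{c}_w$ (tail sums, hence non-increasing) rather than $U^{T}\mathbf{c}_w$, and indeed the paper's own proof produces $\v_x=U\t_x=\v_x^{\downarrow}$---so ``non-decreasing'' in the printed lemma is a typo that your $U^{T}$ happens to match.
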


To prove the lemma, denote by $\s_{x}$ the rows of $S$, so that~\eqref{npu} can be expressed as
\be\label{xpu}
\s_xPU\geq \q_xU
\ee
and the condition that $N$ is column stochastic is equivalent to $\sum_x\s_x=\e\equiv(1,...,1)$.
Denote further by $\s\equiv(\s_1,...,\s_n)$ an $n^2$-dimensional vector, and $\b\equiv (\q_1U,...,\q_nU,-\e)$ an $(nm+n)$-dimensional vector,  and by $A$ the $n^2\times n(m+1)$ matrix
\be
A\equiv
\begin{pmatrix}
PU & 0 & \cdots & 0 & -I_n\\
0 & PU & \cdots & 0 & -I_n\\
\vdots & \vdots & \ddots &\vdots &\vdots\\
 0 & 0 & \cdots & PU & -I_n\\
\end{pmatrix}
\ee
With these notations~\eqref{xpu} is equivalent to $\s A\geq \b$. Now, the question whether there exists $\s\geq 0$ that satisfies $\s A\geq\b$ is a feasibility problem in linear programming.
From the Farkas lemma, such a row vector $\s\in\mbb{R}_{+}^{n^2}$ exists if and only if for any column vector $\t\in\mbb{R}_{+}^{n(m+1)}$ that satisfies $A\t\leq 0$ we also have
$\b\cdot\t\leq 0$. Denote $\t=(\t_1,...,\t_n,\r)^T$, where for each $x=1,...,n$, $\t_x\in\mbb{R}_{+}^{m}$ and $\r\in\mbb{R}^n$. Then, with this notation the dual problem has the form
\be\label{51}
\r\geq PU\t_x\quad\forall x=1,...,n\quad\Rightarrow\quad\e\cdot\r\geq \sum_{x=1}^{n}\q_x\cdot(U\t_x)
\ee
Note that the condition  that $\r\geq PU\t_x$ for all $x$ can be expressed as $r_{w}\geq\max_x\p_{w}\cdot(U\t_x)$. Therefore, taking $r_{w}=\max_x\p_{w}U\t_x$ we conclude that~\eqref{51} holds if and only if
for any $n$ vectors $\t_1,...,\t_n\in\mbb{R}_{+}^{m}$
\be
\sum_{w=1}^{n}\max_x\p_{w}\cdot(U\t_x)\geq \sum_{x=1}^{n}\q_x\cdot(U\t_x)\;.
\ee
Denote by $\v_x\equiv U\t_x$ , and note that $\v_x=\v_x^{\downarrow}$. Moreover, the inverse of $U$ is given by
\be
U^{-1}=\begin{pmatrix}
1 & -1 & 0 & 0&\cdots & 0\\
0 & 1 & -1 & 0 &\cdots & 0\\
0 & 0 & 1 & -1&\cdots & 0\\
\vdots & \vdots & \vdots & \vdots & \ddots & \vdots\\
0 & 0 & 0 & 0 &\cdots & 1
\end{pmatrix}
\ee
Hence,  $\t_x\geq 0$ if and only if $\v_x=\v_x^{\downarrow}$. We therefore conclude that~\eqref{51} holds if and only if
for any $n$ vectors $\v_1,...,\v_n\in\mbb{R}_{+}^{m}$ whose components are arranged in non-decreasing order,
 \be
\sum_{w=1}^{n}\max_x\p_{w}\cdot\v_x\geq \sum_{x=1}^{n}\q_x\cdot\v_x\;.
\ee
This completes the proof of the lemma. 

\section{Theorem 2}
{\bf Theorem 2.} \emph{
Let $\mM^{X \to Y}$ and $\mN^{X'\to Y'}$ be two classical channels. Then, $\mM\precsim \mN$ if and only if there exists a channel 
$\mS^{X \to X'W'}$ and a controlled isometry $\mV^{W'Y' \to Y}$ (i.e. for each $z'$, $\mV^{W'Y' \to Y}(w',y')$ is an injective function of $y'$) such that
\be\label{rel0}
\mM^{X \to Y}=\mV^{W'Y' \to Y}\circ\mN^{X'\to Y'}\circ\mS^{X \to X'W'}
\ee
}
\begin{proof}

Let $P$ and $Q$ be the transition matrices corresponding to $\mN^{X\to Y}$ and $\mM^{X'\to Y'}$, respectively.
Note that w.l.o.g. we can assume that $|Y|=|Y'|\equiv m$, so that $P$ is an $m\times n$ matrix and $Q$ is an $m\times n'$ matrix.
Then, the relation~\eqref{rel0} is equivalent to
\be\label{rel1}
Q=\sum_{z=1}^{\ell}V_zPS_z
\ee
where $\sum_{z=1}^{\ell}S_z$ is a column stochastic matrix and each $V_z$ is a permutation matrix. 

\begin{lemma}
The matrices $P$ and $Q$ are related as in~\eqref{rel1} if and only if there exists a column stochastic matrix $S\equiv\sum_{z=1}^{\ell}S_z$ such that
\be\label{rel2}
U^TQ\leq U^TPS\;,
\ee
where the inequality is entry-wise and $U$ is the upper triangular matrix as defined in~\eqref{ut}.
\end{lemma}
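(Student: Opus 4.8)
The plan is to adapt the proof of Lemma~1 in Appendix~A, interchanging the roles of pre- and post-processing so that the permutations $V_z$ now act on the output (row) index and the sub-stochastic factors $S_z$ on the input (column) index. Write $\p_1,\dots,\p_n$ and $\q_1,\dots,\q_{n'}$ for the columns of $P$ and $Q$; since $P=P^\da$ and $Q=Q^\da$, each $\p_x$ and each $\q_{x'}$ is a probability vector with non-increasing entries. Let $s_{x,x'}$ denote the entries of $S=\sum_z S_z$ and $(S_z)_{x,x'}$ those of $S_z$. Reading off the $x'$-th column of \eqref{rel1} gives
\[
\q_{x'}=\sum_x s_{x,x'}\,D_{x,x'}\,\p_x,\qquad D_{x,x'}\equiv\sum_z\frac{(S_z)_{x,x'}}{s_{x,x'}}\,V_z ,
\]
where each $D_{x,x'}$ is a convex combination of permutations, hence doubly stochastic, so that $D_{x,x'}\p_x\prec\p_x$. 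I will abbreviate $\v_{x'}\equiv\sum_x s_{x,x'}\p_x=(PS)_{\cdot,x'}$.

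For the forward implication I would pass directly to cumulative sums rather than to majorization. Because $U$ is the all-ones upper-triangular matrix of \eqref{ut}, the entry $(U^TQ)_{k,x'}$ is the sum of the first $k$ components of $\q_{x'}$, which, as $\q_{x'}$ is ordered, equals the Ky-Fan norm $\|\q_{x'}\|_{(k)}$; similarly $(U^TP)_{k,x}=\|\p_x\|_{(k)}$. The triangle inequality for $\|\cdot\|_{(k)}$ together with $\|D_{x,x'}\p_x\|_{(k)}\le\|\p_x\|_{(k)}$ then gives
\[
(U^TQ)_{k,x'}=\|\q_{x'}\|_{(k)}\le\sum_x s_{x,x'}\|\p_x\|_{(k)}=(U^TPS)_{k,x'}
\]
for every $k$ and $x'$, which is exactly $U^TQ\le U^TPS$.

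For the converse, assume $S$ is column stochastic with $U^TQ\le U^TPS$ and fix $x'$. Since $\q_{x'}$ is ordered, $\|\q_{x'}\|_{(k)}=(U^TQ)_{k,x'}\le(U^T\v_{x'})_k\le\|\v_{x'}\|_{(k)}$, the last step because a prefix sum never exceeds the sum of the $k$ largest entries. Column stochasticity of $Q$ and of $S$ forces the total weights of $\q_{x'}$ and $\v_{x'}$ both to equal one, so these Ky-Fan inequalities are precisely the majorization $\q_{x'}\prec\v_{x'}$. Hence there is a doubly stochastic $D_{x'}$ with $\q_{x'}=D_{x'}\v_{x'}$; expressing each $D_{x'}$ as a convex combination $\sum_z c_{z,x'}V_z$ of permutation matrices (Birkhoff's theorem, with $\{V_z\}$ taken to range over all $m\times m$ permutations, common to every $x'$) and setting $(S_z)_{x,x'}\equiv c_{z,x'}s_{x,x'}$ yields $\q_{x'}=\sum_z V_z\sum_x (S_z)_{x,x'}\p_x$ for all $x'$, i.e.\ $Q=\sum_z V_zPS_z$ with $\sum_z S_z=S$.

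The step I expect to be the real subtlety lies in the forward direction. It is tempting to assert the ordinary majorization $\q_{x'}\prec\v_{x'}$, but the combined vector $\v_{x'}$ need not be ordered, so its prefix sums $(U^T\v_{x'})_k$ differ from its Ky-Fan norms $\|\v_{x'}\|_{(k)}$ and that majorization statement can genuinely fail. The resolution is that $U^TQ\le U^TPS$ only compares $\q_{x'}$'s Ky-Fan norms to the \emph{weighted prefix sums} of the already-ordered $\p_x$ — a condition strictly weaker than $\v_{x'}$ being ordered — which is exactly what the triangle inequality delivers, and which still suffices for the converse once equality of the total weights is invoked. The remaining ingredients, namely Birkhoff's theorem and matching $\sum_z S_z$ to the prescribed $S$, are routine.
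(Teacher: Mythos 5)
Your proof is correct and follows essentially the same route as the paper's: read off columns, extract doubly stochastic matrices $D_{x,x'}$ from the convex combination of permutations, pass to Ky-Fan/prefix-sum inequalities via $U^T$, and use Birkhoff's theorem for the converse (your per-column $D_{x'}$ with coefficients $c_{z,x'}$ is in fact the careful version of what the paper writes with a single $D$). One remark: the ``real subtlety'' you flag is not actually there --- $\v_{x'}=\sum_x s_{x,x'}\p_x$ is a non-negative combination of vectors that are all non-increasing in the same index, hence is itself non-increasing, so its prefix sums coincide with its Ky-Fan norms and the entry-wise condition $U^TQ\le U^TPS$ is genuinely equivalent to the column-wise majorization $\q_{x'}\prec\v_{x'}$; your triangle-inequality argument is valid either way, so this does not affect correctness.
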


To prove the lemma, denote by $\{\p_x\}_{x=1}^{n}$ and $\{\q_{x'}\}_{x'=1}^{n'}$ the $n$ and $n'$ columns of the matrices $P$ and $Q$, and by $s_{xz|x'}$ and $s_{x|x'}$ the components of the matrices $S_z$ and $S\equiv\sum_{z=1}^{\ell}S_z$. Then, the relation~\eqref{rel1} can be expressed as:
\be
\q_{x'}=\sum_{x=1}^{n}s_{x|x'}D_{xx'}\p_x\quad\text{where}\quad D_{xx'}\equiv\sum_{z=1}^{\ell}\frac{s_{xz|x'}}{s_{x|x'}}V_z\;.
\ee
Since for each $x$ and $x'$ the matrix $D_{xx'}$ is an $m\times m$ doubly-stochastic matrix, we have that for each $x$ and $x'$, $D_{xx'}\p_x\prec\p_x$. Thus,
if the matrices $Q$ and $P$ are related as in~\eqref{rel1} then there must exists a column stochastic matrix $S=(s_{x|x'})$ such that
\be\label{rel4}
\q_{x'}\prec \sum_{x=1}^{n}s_{x|x'}\p_x\quad\forall x'=1,...,n.
\ee
The above relation is equivalent to~\eqref{rel2} since for each $x$ and $x'$, $\q_{x'}=\q_{x'}^{\da}$ and $\p_x=\p_x^{\da}$.

Conversely, suppose~\eqref{rel2} holds. Therefore, also~\eqref{rel4} holds, so that there exists an $m\times m$ doubly-stochastic matrix $D$ such that 
\be
\q_{x'}= \sum_{x=1}^{n}s_{x|x'}D\p_x\quad\forall x'=1,...,n'.
\ee
Expressing $D=\sum_{z=1}^{\ell}c_{z}V_z$ as a convex combination of permutation matrices we obtain 
\be
\q_{x'}= \sum_{x=1}^{n}s_{x|x'}\sum_{z=1}^{\ell}c_{z}V_z\p_x=\sum_{x,z}V_z\p_x s_{x|x'}c_z\quad\forall x'=1,...,n.
\ee 
Finally, denote $s_{xz|x'}\equiv c_zs_{x|x'}$ and note that with this notation the above equation is equivalent to~\eqref{rel1}, where for each $z$ the the components of the matrix $S_z$  are $s_{xz|x'}$.
This completes the proof of the lemma.

The question of whether there exists such a column stochastic matrix $S$ that satisfies~\eqref{rel2} can be solved with linear programming.

\begin{lemma}
There exists a $n\times n'$ column stochastic matrix $S$ that satisfies~\eqref{rel2} if and only if for any set of  $n'$ vectors $\r_1,...,\r_{n'}\in\mbb{R}_{+}^{m}$ whose components are arranged in non-decreasing order,
 \be\label{lem0}
\sum_{x'=1}^{n'}\max_{x}\;\r_{x'}\cdot\p_{x}\geq \sum_{x'=1}^{n'}\r_{x'}\cdot\q_{x'}\;.
\ee
\end{lemma}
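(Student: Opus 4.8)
The plan is to recognize this lemma as the exact analogue, under the transposition $U \leftrightarrow U^{T}$ and rows $\leftrightarrow$ columns of $S$, of the corresponding duality lemma proved in Appendix~A, and to obtain it by the same linear-programming route. First I would note that the existence of a column-stochastic $S$ with $U^{T}Q\le U^{T}PS$ is a linear feasibility problem in the nonnegative entries $s_{x|x'}$ of $S$. Writing $\s_{x'}$ for the columns of $S$, the matrix inequality \eqref{rel2} reads columnwise as $U^{T}P\s_{x'}\ge U^{T}\q_{x'}$, while column-stochasticity reads $\e\cdot\s_{x'}=1$; crucially, neither the inequalities nor the normalization couple different columns, so the problem splits into $n'$ independent feasibility problems, one for each $x'$. (Equivalently one could assemble a single block-diagonal program exactly as in Appendix~A.)

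For a fixed $x'$ I would apply Farkas' lemma to the system $\{\s\ge0,\ U^{T}P\s\ge U^{T}\q_{x'},\ \e\cdot\s=1\}$, folding the equality into a pair of inequalities so that its multiplier $\beta$ is a free real. Feasibility then holds if and only if every dual pair $(\mbf{y},\beta)$ with $\mbf{y}\ge0$ and $P^{T}U\mbf{y}+\beta\e\le0$ (entrywise) satisfies $(U\mbf{y})\cdot\q_{x'}+\beta\le0$. The dual inequality $P^{T}U\mbf{y}+\beta\e\le0$ is precisely $\beta\le-\max_{x}\p_{x}\cdot(U\mbf{y})$, and inserting this extremal value of $\beta$ collapses the condition to
\be
\max_{x}\p_{x}\cdot(U\mbf{y})\ \ge\ \q_{x'}\cdot(U\mbf{y})\qquad\text{for all }\mbf{y}\ge0 .
\ee
This is the step where the maximum over the input index $x$ is generated: it comes from squeezing the normalization multiplier $\beta$.

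Next I would reparametrize by setting $\r_{x'}\equiv U\mbf{y}$. Since $U^{-1}$ is the bidiagonal difference matrix (displayed in Appendix~A), the constraint $\mbf{y}=U^{-1}\r_{x'}\ge0$ is equivalent to the components of $\r_{x'}$ being monotonically ordered and nonnegative -- exactly the class of vectors appearing in the statement. Under this change of variables the per-$x'$ condition becomes $\max_{x}\r_{x'}\cdot\p_{x}\ge\r_{x'}\cdot\q_{x'}$. Finally I would reassemble the $n'$ independent conditions: they hold simultaneously if and only if $\sum_{x'}\max_{x}\r_{x'}\cdot\p_{x}\ge\sum_{x'}\r_{x'}\cdot\q_{x'}$ for every admissible family $\{\r_{x'}\}$ -- the forward direction by summing, the reverse by setting all but one $\r_{x'}$ to zero -- which is \eqref{lem0}.

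The main obstacle is bookkeeping rather than conceptual: getting the Farkas set-up and, above all, the reparametrization exactly right, i.e. verifying through $U^{-1}$ that $\mbf{y}\ge0$ corresponds precisely to the monotone ordering of $\r_{x'}$, and confirming that extremizing over $\beta$ is what produces the $\max_{x}$. One caveat worth flagging: with $\r_{x'}=U\mbf{y}$ and $\mbf{y}\ge0$ the components of $\r_{x'}$ come out non-increasing (consistent with $\r=\r^{\da}$ used in the main text), so the ordering in \eqref{lem0} should be read in that sense.
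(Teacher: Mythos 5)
Your proof is correct and follows essentially the same route as the paper: a Farkas/linear-programming duality argument followed by the change of variables $\r_{x'}=U\t$, with the observation that $\t\ge 0$ corresponds to monotone $\r_{x'}$ and that the $\max_x$ arises from optimizing the normalization multiplier. The only (immaterial) differences are that you split the feasibility problem into $n'$ independent per-column LPs and keep the stochasticity equality with a free multiplier, whereas the paper assembles one block LP and relaxes to sub-stochasticity; your flag that the ordering in the statement should read as non-increasing (i.e.\ $\r=\r^{\da}$) is also consistent with the paper's conventions.
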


To prove the lemma, denote by $\{\s_{x'}\}_{x'=1}^{n'}$ the columns of $S$, so that~\eqref{rel2} can be expressed as
\be\label{e0}
U^TP\s_{x'}\geq U^T\q_{x'}\quad\quad\forall\;x'=1,...,n'\;.
\ee
and the condition that $S$ is column stochastic is equivalent to $\e\cdot\s_x= 1$ for each $x=1,...,n$, where $\e\equiv (1,...,1)^T\in\mbb{R}^n$.
Note however that it is sufficient to require that $\e\cdot\s_x\leq 1$ since by adding to $\s_x$ a non-negative vector, the relation~\eqref{e0} is preserved.
Denote further by $\e'\equiv (1,...,1)^T\in\mbb{R}^{n'}$,
\be
A \equiv
\begin{pmatrix}
U^TP & 0 & \cdots & 0 \\
0 & U^TP & \cdots & 0 \\
\vdots & \vdots & \ddots &\vdots\\
 0 & 0 & \cdots & U^TP\\
 -\e^T & 0 &\cdots & 0\\
 0 & -\e^T & \cdots & 0\\
 \vdots & \vdots & \ddots & \vdots\\
 0 & 0 & \cdots & -\e^T
\end{pmatrix}\in\mbb{R}^{(m+1)n'\times nn'} 
\ee
\be
\b \equiv
\begin{pmatrix}
U^T\q_1 \\
\vdots\\
U^T\q_{n'}\\
-\e'
\end{pmatrix}\in\mbb{R}^{(m+1)n'}, \\
\s \equiv
\begin{pmatrix}
\s_1 \\
\vdots\\
\s_{n'}
\end{pmatrix}\in\mbb{R}_{+}^{nn'}
\ee
Using this notation,~\eqref{e0} is equivalent to $A\s\geq \b$. The question of whether there exists a $\s\geq 0$ that satisfies $A\s\geq\b$ is a feasibility problem in linear programming.
From Farkas lemma, such a row vector $\s\in\mbb{R}_{+}^{nn'}$ exists if and only if for any vector $\t\in\mbb{R}_{+}^{(m+1)n'}$ that satisfies $\t^TA\leq 0$ we also have
$\b\cdot\t\leq 0$. Denote $\t=(\t_1,...,\t_{n'},\v)^T$, where for each $x'=1,...,{n'}$, $\t_{x'}\in\mbb{R}_{+}^{m}$ and $\v=(v_1,...,v_{n'})^T\in\mbb{R}^{n'}$. With this notation, the dual problem has the form
\be\label{a0}
v_{x'}\e\geq \t_{x'}^TU^TP\quad\forall x'=1,...,n'\quad\Rightarrow\quad\e'\cdot\v\geq \sum_{x'=1}^{n'}\t_{x'}^TU^T\q_{x'}
\ee
Note that the condition  $v_{x'}\e\geq \t_{x'}^TU^TP$ implies that $v_{x'}\geq \t_{x'}^TU^T\p_{x}$ for all $x=1,...,n$ and $x'=1,...,n'$. 
Therefore, taking the optimal value $v_{x'}=\max_{x}\t_{x'}^TU^T\p_{x}$ we get that~\eqref{a0} holds if and only if
for any $n'$ vectors $\t_1,...,\t_{n'}\in\mbb{R}_{+}^{m}$
\be
\sum_{x'=1}^{n}\max_{x}\t_{x'}^TU^T\p_{x}\geq \sum_{x'=1}^{n}\t_{x'}^TU^T\q_{x'}\;.
\ee
Denoting by $\r_{x'}\equiv U\t_{x'}$ we obtain the the above equation is equivalent to~\eqref{lem0}.
This completes the proof of the lemma.

We are now ready to prove Theorem~\ref{cun}. We therefore assume now that $\mM\precsim \mN$. 
Then, by definition we have for all $\ell$ and any $\r_1,...,\r_\ell\in\mbb{R}_{+}^{m}$ whose components are arranged in a non-decreasing order
\be
\sum_{z=1}^{\ell}\max_{x}\;\r_z\cdot\p_{x}\geq \sum_{z=1}^{\ell}\max_{x'}\;\r_z\cdot\q_{x'}.
\ee
Taking $\ell=n'$ we get that
 \be
\sum_{x'=1}^{n'}\max_{x}\;\r_{x'}\cdot\p_{x}\geq \sum_{z=1}^{n'}\max_{x'}\;\r_z\cdot\q_{x'}\geq \sum_{x'=1}^{n'}\;\r_{x'}\cdot\q_{x'}\;.
\ee   

Therefore, from the two lemmas above it follows that $P$ and $Q$ are related as in~\eqref{rel1}. This completes the ``only if'' direction in the theorem statement. 

Finally, we need to show that if $\mathcal{M} = \mathcal{V}^{ZY' \rightarrow Y} \circ \mathcal{N}^{X' \rightarrow Y'} \circ \mathcal{S}^{X \rightarrow X'Z}$, then  $\mathcal{M} \precsim \mathcal{N}$. We introduce here Dirac notation, such that $x \rightarrow \ket{x}\bra{x}$. Then the probability of observing outcome $y$ when $x$ is sent through $\mathcal{N}$ can be written as $\text{Tr}[\mathcal{N}(\ket{x}\bra{x})\ket{y}\bra{y}]$. Upon substituting everything into the reward function, we have:
\small
\begin{align*}
    & \text{Prob}_{T}(\mathcal{M}) = \text{Prob}_{T}(\mathcal{V}^{ZY' \rightarrow Y} \circ \mathcal{N}^{X' \rightarrow Y'} \circ \mathcal{S}^{X \rightarrow X'Z}) \\
    &= \sum_{z} \max_{x} \Bigg( \sum_{w=1}^{m} t_{w|z}  \sum_{y=1}^{w} \text{Tr}[\mathcal{V} \circ  \mathcal{N}  \circ \mathcal{S}(\ket{x}\bra{x}) \ket{y}\bra{y}] \Bigg) \\
     & \leq \sum_{z} \max_{x} \Bigg( \sum_{w=1}^{m} t_{w|z} \sum_{y=1}^{w} \text{Tr}[ \sum_{z'} p(z'|x) \mathcal{V}_{z'}( \mathcal{N}(\ket{x}\bra{x})) \ket{y}\bra{y}] \Bigg) \\
    & \leq \sum_{z} \max_{x} \Bigg( \sum_{w=1}^{m} t_{w|z} \sum_{y=1}^{w} \text{Tr}[ \max_{\mathcal{V}} \mathcal{V}( \mathcal{N}(\ket{x}\bra{x})) \ket{y}\bra{y}] \Bigg) \\
     & \leq \sum_{z} \max_{x} \Bigg( \sum_{w=1}^{m} t_{w|z} \sum_{y=1}^{w} \text{Tr}[ \mathcal{N}(\ket{x}\bra{x}) \ket{y}\bra{y}] \Bigg) \\
     &= \text{Prob}_{T}(\mathcal{N})
\end{align*}
\normalsize
where the second to last line follows from recalling that $\mathcal{N}$ is already ordered by definition s.t. $\text{Tr}[\mathcal{N}(\ket{x}\bra{x}) \ket{j}\bra{j}] \geq \text{Tr}[\mathcal{N}(\ket{x}\bra{x}) \ket{j+1}\bra{j+1}]$ for all $x$ and $j$. 
\end{proof}
{\bf Operational Interpretation.} Here, we demonstrate that the payoff function $\text{Prob}_{T}(\mathcal{N})$ provides a first operational interpretation of the dynamical monotones introduced in~\cite{Gour3}. \\~\\
We begin by defining a quantum channel $\mathcal{N}$ as a completely-positive trace preserving map from system $A$ to system $B$, which can be written as $\mathcal{N} \in \text{CPTP}(\text{A} \rightarrow \text{B})$. A superchannel $\Theta$ is then defined as a map from quantum channels to quantum channels s.t. $\Theta : \text{CPTP}(\text{A}_{0} \rightarrow \text{A}_{1}) \rightarrow \text{CPTP}(\text{B}_{0} \rightarrow \text{B}_{1})$. 

Finally, recall that the Choi matrix $J_{\mathcal{N}}$ of channel $\mathcal{N}$ is defined as 
\begin{align*}
    J_{\mathcal{N}} \triangleq \sum_{i, j} \ket{i}\bra{j} \otimes \mathcal{N}(\ket{i}\bra{j})
\end{align*}
We now restate the lemma first proved in~\cite{Gour3}.

\begin{lemma}
Let $\text{FREE}(A \rightarrow B)$ be a convex and topologically closed set where $A = (A_{0}, A_{1})$ and $B = (B_{0}, B_{1})$. Denote the Choi matrix for any channel $\mathcal{N}$ as $J_{\mathcal{N}}$
For any quantum channel $\mathcal{P}_{B} \in \text{CPTP}(B_{0} \rightarrow B_{1})$, define
\begin{align*}
f_{\mathcal{P}} (\mathcal{N}) & \triangleq \max_{\Theta \in \text{FREE}(A \rightarrow B)} \big< \mathcal{P}, \Theta [\mathcal{N}_{A}] \big> \\
& = \max_{\Theta \in \text{FREE}(A \rightarrow B)} \text{Tr}\Big( J_{\mathcal{P}}^{\dag} J_{\Theta[\mathcal{N}]} \Big)
\end{align*}
for every $\mathcal{N}_{A} \in \text{CPTP} (A_{0} \rightarrow A_{1})$. Let $\mathcal{N}_{A} \in \text{CPTP}(A_{0} \rightarrow A_{1})$ and $\mathcal{M}_{B} \in \text{CPTP}(B_{0} \rightarrow B_{1})$ be two
quantum channels. Then, $\mathcal{M}_{B} = \Theta_{A\rightarrow B} [\mathcal{N}_{A}]$, for some super-channel $\Theta \in \text{FREE} (A \rightarrow B)$ if and only if
\begin{align*}
f_{\mathcal{P}} (\mathcal{N}_{A}) \geq f_{\mathcal{P}} (\mathcal{M}_{B}) \ \ \ \ \ \ \  \forall P \in \text{CPTP}(B_{0} \rightarrow B_{1})
\end{align*}
\end{lemma}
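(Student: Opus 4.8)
The plan is to prove the two implications separately, treating the reverse direction as the substantive one. Throughout I will lean on three structural facts about the free set that are standard in the resource-theoretic setting (the first is stated explicitly in the lemma, the other two I would flag as the usual axioms on free operations): that $\text{FREE}(A \to B)$ is convex and topologically closed (hence compact, since superchannels form a bounded set in finite dimension); that $\text{FREE}(B \to B)$ contains the identity superchannel; and that free superchannels are closed under composition, so $\Theta \circ \Theta_0 \in \text{FREE}(A \to B)$ whenever $\Theta \in \text{FREE}(B \to B)$ and $\Theta_0 \in \text{FREE}(A \to B)$. Here $f_{\mathcal{P}}$ is read as applying its defining maximization with the input space of its argument, so $f_{\mathcal{P}}(\mathcal{M}_B) = \max_{\Theta \in \text{FREE}(B \to B)} \langle \mathcal{P}, \Theta[\mathcal{M}_B] \rangle$.

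For the forward (``only if'') direction, suppose $\mathcal{M}_B = \Theta_0[\mathcal{N}_A]$ with $\Theta_0 \in \text{FREE}(A \to B)$. Then for every test channel $\mathcal{P}$ the composition-closure property gives
\[ f_{\mathcal{P}}(\mathcal{M}_B) = \max_{\Theta \in \text{FREE}(B \to B)} \langle \mathcal{P}, (\Theta \circ \Theta_0)[\mathcal{N}_A] \rangle \le \max_{\Theta' \in \text{FREE}(A \to B)} \langle \mathcal{P}, \Theta'[\mathcal{N}_A] \rangle = f_{\mathcal{P}}(\mathcal{N}_A), \]
since $\Theta \circ \Theta_0$ ranges over a subset of $\text{FREE}(A \to B)$. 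This establishes $f_{\mathcal{P}}(\mathcal{N}_A) \ge f_{\mathcal{P}}(\mathcal{M}_B)$ for all $\mathcal{P}$.

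For the reverse (``if'') direction I would first record the elementary bound $f_{\mathcal{P}}(\mathcal{M}_B) \ge \langle \mathcal{P}, \mathcal{M}_B \rangle = \text{Tr}(J_{\mathcal{P}}^{\dag} J_{\mathcal{M}_B})$, obtained by restricting the maximization to the identity superchannel. Combined with the hypothesis, this gives $\text{Tr}(J_{\mathcal{P}}^{\dag} J_{\mathcal{M}_B}) \le f_{\mathcal{P}}(\mathcal{N}_A) = \max_{J \in K} \text{Tr}(J_{\mathcal{P}}^{\dag} J)$ for all $\mathcal{P} \in \text{CPTP}(B_0 \to B_1)$, where $K \equiv \{ J_{\Theta[\mathcal{N}_A]} : \Theta \in \text{FREE}(A \to B) \}$ is the image of the compact convex free set under the continuous linear Choi map, hence itself compact and convex, and $\max_{J \in K} \text{Tr}(J_{\mathcal{P}}^{\dag} J)$ is exactly its support function in the direction $J_{\mathcal{P}}$. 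I then argue by contradiction: if $J_{\mathcal{M}_B} \notin K$, the separating hyperplane theorem in the real vector space of Hermitian operators yields an $H = H^{\dag}$ with $\text{Tr}(H J_{\mathcal{M}_B}) > \max_{J \in K} \text{Tr}(H J)$.

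The crux of the argument — and the step I expect to be the main obstacle — is that this separating $H$ is an arbitrary Hermitian operator, while the hypothesis only supplies inequalities for genuine channel-Choi directions $J_{\mathcal{P}}$. To close the gap I would exploit that $J_{\mathcal{M}_B}$ and every element of $K$ are Choi matrices of channels, so they all satisfy the affine constraint $\text{Tr}_{B_1}[J] = I_{B_0}$. Consequently, adding a term $I_{B_1} \otimes \sigma_{B_0}$ to $H$ shifts every inner product $\text{Tr}(H J)$ by the same constant $\text{Tr}[\sigma_{B_0}]$, and scaling $H$ by a positive number scales both sides of the strict inequality; neither operation destroys the separation. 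Choosing a small scalar $\alpha > 0$ and the compensating shift that enforces the trace condition, I would set
\[ J_{\mathcal{P}} \equiv \alpha \Big( H - \tfrac{1}{|B_1|}\, I_{B_1} \otimes \text{Tr}_{B_1}[H] \Big) + \tfrac{1}{|B_1|}\, I_{B_0 B_1}, \]
which satisfies $\text{Tr}_{B_1}[J_{\mathcal{P}}] = I_{B_0}$ by construction, and which is positive semidefinite once $\alpha$ is small enough that the dominant last term controls the spectrum; thus $J_{\mathcal{P}}$ is the Choi matrix of a legitimate $\mathcal{P} \in \text{CPTP}(B_0 \to B_1)$. Because $J_{\mathcal{P}}$ lies in the coset $\alpha H + I_{B_1}\otimes\sigma_{B_0}$, it inherits the strict separation, giving $\text{Tr}(J_{\mathcal{P}}^{\dag} J_{\mathcal{M}_B}) > \max_{J \in K} \text{Tr}(J_{\mathcal{P}}^{\dag} J) = f_{\mathcal{P}}(\mathcal{N}_A)$, in direct contradiction with the displayed consequence of the hypothesis. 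Hence $J_{\mathcal{M}_B} \in K$, i.e. $\mathcal{M}_B = \Theta[\mathcal{N}_A]$ for some $\Theta \in \text{FREE}(A \to B)$, completing the proof.
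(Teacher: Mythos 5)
Your proof is correct, and it is essentially the same argument the paper relies on: the paper gives no proof of its own here but defers to Gour--Wilde (reference \cite{Gour3}), where the reverse direction is established by exactly this route --- viewing $\{J_{\Theta[\mathcal{N}]}:\Theta\in\text{FREE}(A\to B)\}$ as a compact convex set of Choi matrices, separating a hypothetical outside point by a Hermitian $H$, and then upgrading $H$ to a legitimate channel Choi matrix via a positive rescaling plus a shift along $I_{B_1}\otimes\sigma_{B_0}$, which acts as an additive constant on the affine slice $\text{Tr}_{B_1}[J]=I_{B_0}$. The one caveat, which you flag yourself, is that the lemma as stated only assumes convexity and closedness, whereas both your forward direction and the bound $f_{\mathcal{P}}(\mathcal{M}_B)\ge\langle\mathcal{P},\mathcal{M}_B\rangle$ genuinely require the additional standard axioms that the identity superchannel is free and that free superchannels compose to free ones; these are implicit in the cited reference's definition of a free set and do hold for the concrete $\text{FREE}$ used later in this appendix, so your argument is complete as written.
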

\begin{proof}
See~\cite{Gour3} for a complete proof. 
\end{proof}

We are now ready to prove how our work provides an operational interpretation of Lemma 9. 
\begin{theorem}
Define the set of free classical superchannels $\text{FREE}(\text{A} \rightarrow \text{B})$ to contain all $\Theta$ of the form:
\begin{align*}
    \Theta[\mathcal{N}](\ket{k}\bra{k}) = \sum_{z} p(z|k) \mathcal{V}_{z}\Big( \mathcal{N} \Big( \mathcal{S}(\ket{k}\bra{k}) \Big) \Big) \ \ \ \forall \ k
\end{align*}
Then, $\mathcal{M}_{B} = \Theta_{A\rightarrow B} [\mathcal{N}_{A}]$, for some super-channel $\Theta \in \text{FREE} (A \rightarrow B)$ if and only if
\begin{align*}
f_{\mathcal{P}} (\mathcal{N}_{A}) \geq f_{\mathcal{P}} (\mathcal{M}_{B}) \ \ \ \ \ \ \  \forall P \in \text{CPTP}(B_{0} \rightarrow B_{1})
\end{align*}
\end{theorem}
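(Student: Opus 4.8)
The plan is to recognize the statement as a direct specialization of Lemma~9 (proved in~\cite{Gour3}) to the concrete set $\text{FREE}(\text{A}\to\text{B})$ of classical free superchannels defined here. Lemma~9 already establishes, for \emph{any} convex and topologically closed set of free superchannels, that $\mathcal{M}_B=\Theta[\mathcal{N}_A]$ for some $\Theta\in\text{FREE}(\text{A}\to\text{B})$ if and only if $f_{\mathcal{P}}(\mathcal{N}_A)\geq f_{\mathcal{P}}(\mathcal{M}_B)$ for all $\mathcal{P}\in\text{CPTP}(B_0\to B_1)$. Hence the entire content reduces to checking the two hypotheses of Lemma~9 for this particular set, namely that the collection of all $\Theta$ of the form $\Theta[\mathcal{N}](\ket{k}\bra{k})=\sum_z p(z|k)\,\mathcal{V}_z\big(\mathcal{N}(\mathcal{S}(\ket{k}\bra{k}))\big)$ is convex and closed. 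I would also note at the outset that this free form is exactly the simulation map appearing in Theorem~\ref{cun}, so that combining the present theorem with Theorem~\ref{cun} identifies channel majorization $\mathcal{M}\precsim\mathcal{N}$ with the ordering $f_{\mathcal{P}}(\mathcal{N})\geq f_{\mathcal{P}}(\mathcal{M})$ for all $\mathcal{P}$; this is precisely the promised operational interpretation of the dynamical monotones.

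For convexity, I would work with the explicit classical realization in which the pre-processing jointly produces the channel input and the control label, i.e. a stochastic matrix $\sigma(x',z\mid k)$ feeding $x'$ into $\mathcal{N}$ and $z$ into the controlled isometry $\mathcal{V}_z$. Given two free superchannels with data $(\sigma_1,\{\mathcal{V}^{(1)}_z\})$ and $(\sigma_2,\{\mathcal{V}^{(2)}_z\})$ and a weight $\lambda\in[0,1]$, I would enlarge the control register to $\tilde z=(b,z)$ with a branch flag $b\in\{1,2\}$ and set $\tilde\sigma(x',(b,z)\mid k)=\lambda\,\sigma_1(x',z\mid k)$ for $b=1$ and $(1-\lambda)\,\sigma_2(x',z\mid k)$ for $b=2$, together with $\mathcal{V}_{(b,z)}=\mathcal{V}^{(b)}_z$. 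Since $\tilde\sigma$ is again column stochastic and linearity of $\mathcal{N}$ is respected, the resulting map equals $\lambda\Theta_1+(1-\lambda)\Theta_2$ and is manifestly of the free form, so the set is convex. The only point requiring care is that the flag must be carried through the pre-processing and correlated with $\mathcal{N}$'s input, which is exactly what the $\mathcal{S}^{X\to X'W'}$ structure of Theorem~\ref{cun} permits.

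For closedness, I would parametrize the free set by the pair $(\sigma,\{\mathcal{V}_z\})$ and observe that the map sending this data to the Choi matrix $J_\Theta$ of the superchannel is continuous (indeed multilinear). The one subtlety is that the control alphabet is a priori unbounded, so I would first bound it: since there are only finitely many injective functions $Y'\to Y$, any two control values acting by the same post-processing isometry can be merged by adding their $\sigma$-weights, so without loss of generality $z$ ranges over this finite set. The remaining parameters $\sigma$ then live in a compact polytope of column-stochastic matrices, making the whole parameter space compact; its continuous image, the set of free Choi matrices, is therefore compact and in particular closed.

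The main obstacle I anticipate is the convexity step, because naively the two superchannels use different pre-processing channels and a single $\mathcal{S}$ cannot reproduce both; the resolution is precisely to promote the convex weight into a classical flag stored in the control register and correlated with the input, which is only legitimate because the free form allows the pre-processing to generate the control jointly with $\mathcal{N}$'s input. Once convexity and closedness are in hand, no further work is needed: Lemma~9 applies verbatim to this $\text{FREE}(\text{A}\to\text{B})$ and yields the stated equivalence.
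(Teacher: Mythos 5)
Your route is genuinely different from the paper's. The paper does not verify the hypotheses of Lemma 9 and invoke it as a black box; it proves the equivalence directly by showing that the two families of monotones coincide up to positive scaling. Concretely, it computes $f_{\mathcal{P}}(\mathcal{N})=\sum_{k}\max_x\sum_y q^{\downarrow}_{y|k}\,p^{\downarrow}_{y|x}$ for the classical free set, then Abel-sums with $\tilde t_{y|k}=q^{\downarrow}_{y|k}-q^{\downarrow}_{y+1|k}$ (i.e.\ inverts the matrix $U$) to rewrite this as $\bigl(\sum_{z}|\tilde{\mathbf{t}}_{z}|\bigr)\,\mathrm{Prob}_T(\mathcal{N})$ for an explicit sub-stochastic $T$, and conversely realizes every $\mathrm{Prob}_{\mathbf{p}}$ as a scaled $f_{\mathcal{P}}$; combined with Theorem 2 (free simulability $\iff$ $\mathrm{Prob}_T$-monotonicity for all $T$) this yields the stated iff. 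What that computation buys is exactly the advertised operational interpretation: each individual monotone $f_{\mathcal{P}}$ \emph{is}, up to a positive constant, a gambling-game payoff. Your approach establishes only that the two preorders coincide, which is weaker, and it outsources the real work to the duality argument of~\cite{Gour3} instead of to the Farkas-lemma machinery already developed in Appendix B.

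There is also a gap in your verification of the hypotheses. Convexity and topological closedness alone do not yield the ``only if'' direction of Lemma 9: if $\mathcal{M}=\Theta_0[\mathcal{N}]$, then $f_{\mathcal{P}}(\mathcal{M})=\max_{\Theta\in\mathrm{FREE}(B\to B)}\langle\mathcal{P},\Theta[\Theta_0[\mathcal{N}]]\rangle$, and bounding this by $f_{\mathcal{P}}(\mathcal{N})$ requires $\Theta\circ\Theta_0\in\mathrm{FREE}(A\to B)$, i.e.\ closure of the family under composition. This does hold here (compose the classical pre-processings and the controlled injections), but it must be stated and checked, and your plan omits it. On the positive side, your observation that the product form $\sum_z p(z|k)\,\mathcal{V}_z(\mathcal{N}(\mathcal{S}(\ket{k}\bra{k})))$ is not convex as written, and that one must pass to the joint form $\mathcal{S}^{X\to X'W'}$ of Theorem 2 to run the flag construction, is correct and is the intended reading; your compactness argument for closedness via the finite post-processing alphabet is also sound.
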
 
\begin{proof}
From Theorem 2 of this work, $\text{Prob}_{T}(\mathcal{M}) \leq \text{Prob}_{T}(\mathcal{N})$ for all $T$ if and only if there exists a free classical superchannel s.t. $\Theta[\mathcal{N}] = \mathcal{M}$. Thus, the application above will hold if we can demonstrate the following:
\begin{itemize}
    \item  For every $T$ with fixed $z$, there exists a classical channel $\mathcal{P}$ and positive constant $\alpha_{T}$ such that $\text{Prob}_{T}(..) = \alpha_{T} f_{\mathcal{P}}(..)$
    \item For every classical channel $\mathcal{P}$, there exists a distribution $T$ and positive constant $\alpha_{\mathcal{P}}$ such that $f_{\mathcal{P}}(...) = \alpha_{\mathcal{P}} \text{Prob}_{T}(...)$
\end{itemize}
We first prove the second statement. 
\small
\begin{align*}
& f_{\mathcal{P}}(\mathcal{N}) \triangleq \max_{\Theta \in \text{FREE}(\text{A} \rightarrow \text{B})} \text{Tr}\Big( J_{\mathcal{P}}^{\dag} J_{\Theta[\mathcal{N}]} \Big) \\
&=  \max_{\Theta \in \text{FREE}(\text{A} \rightarrow \text{B})} \text{Tr}\Bigg( \Big( \sum_{ij} \mathcal{P}(\ket{i}\bra{j}) \otimes \ket{i}\bra{j} \Big) \Big( \sum_{i'j'} \Theta[\mathcal{N}](\ket{i}\bra{j}) \otimes \ket{i}\bra{j} \Big) \Bigg) \\
&=  \max_{\Theta \in \text{FREE}(\text{A} \rightarrow \text{B})} \text{Tr}\Bigg( \sum_{j} \mathcal{P}(\ket{j}\bra{j}) \otimes \ket{j}\bra{j} \Big( \sum_{j'} \Theta[\mathcal{N}](\ket{j'}\bra{j'}) \otimes \ket{j'}\bra{j'} \Big) \Bigg) \\
&=  \max_{\Theta \in \text{FREE}(\text{A} \rightarrow \text{B})} \text{Tr}\Bigg( \sum_{j} \mathcal{P}(\ket{j}\bra{j}) \Theta[\mathcal{N}](\ket{j}\bra{j}) \Bigg) \\
\end{align*}
\normalsize
where the simplification of the Choi Matrix follows from noting that both $\mathcal{P}$ and $\mathcal{N}$ are classical channels. Recalling the definition of free operations, the above becomes 
\small
\begin{align*}
   f_{\mathcal{P}}(\mathcal{N}) &= \max_{\mathcal{S}, \{\mathcal{V}_{z}\}, \{p(k|z)\}} \sum_{k} \text{Tr}\Bigg[ \mathcal{P}(\ket{k}\bra{k}) \sum_{z} p(z|k) \mathcal{V}_{z}\Bigg(\mathcal{N} \Big(  \mathcal{S} (\ket{k}\bra{k})\Big) \Bigg) \Bigg] \\
    &=  \max_{\mathcal{S}, \{\mathcal{V}_{z}\}, \{p(k|z)\}} \sum_{k} \sum_{z} p(z|k) \\
    & \times \text{Tr}\Bigg[ \mathcal{P}(\ket{k}\bra{k}) \mathcal{V}_{z}\Bigg(\mathcal{N} \Big(  \mathcal{S} (\ket{k}\bra{k})\Big) \Bigg) \Bigg] \\
    &=  \max_{\mathcal{S}, \{V_{k} \}} \sum_{k} \text{Tr}\Bigg[ \mathcal{P}(\ket{k}\bra{k}) \mathcal{V}_{k}\Bigg(\mathcal{N} \Big(  \mathcal{S} (\ket{k}\bra{k})\Big) \Bigg) \Bigg] \\
\end{align*}
\normalsize
The third line follows by setting $p(z|k) = \delta_{z, k}$ and noting that this will always be optimal, as 
\small
\begin{align*}
\sum_{z} p(z|k)\text{Tr}\Bigg[ \mathcal{P}(\ket{k}\bra{k}) \mathcal{V}_{z}\Bigg(\mathcal{N} \Big(  \mathcal{S} (\ket{k}\bra{k})\Big) \Bigg) \Bigg] \\
\leq \max_{z} \Big( \text{Tr}\Bigg[ \mathcal{P}(\ket{k}\bra{k}) \mathcal{V}_{z}\Bigg(\mathcal{N} \Big(  \mathcal{S} (\ket{k}\bra{k})\Big) \Bigg) \Bigg] \Big)
\end{align*} 
\normalsize
We continue simplifying this expression: 
\begin{align*}
    f_{\mathcal{P}}(\mathcal{N}) &= \max_{\mathcal{S}, \{V_{k} \}} \sum_{k=1}^{d} \text{Tr}\Bigg[ \mathcal{P}(\ket{k}\bra{k}) \mathcal{V}_{k}\Bigg(\mathcal{N} \Big(  \mathcal{S} (\ket{k}\bra{k})\Big) \Bigg) \Bigg] \\
   &= \max_{\mathcal{S}, \{V_{k} \}} \sum_{k=1}^{d} \sum_{y=1}^{d} q_{y|k} \bra{y} \mathcal{V}_{k}\Bigg(\mathcal{N} \Big(  \mathcal{S} (\ket{k}\bra{k})\Big) \Bigg) \ket{y}\\
    &= \max_{\mathcal{S}} \sum_{k=1}^{d} \sum_{y=1}^{d} q^{\downarrow}_{y|k} \bra{y} \mathcal{N}^{\downarrow} \Big(  \mathcal{S} (\ket{k}\bra{k})\Big) \ket{y}\\
    &=  \sum_{k=1}^{d}  \max_{x} \sum_{y=1}^{d} q^{\downarrow}_{y|k} \bra{y} \mathcal{N}^{\downarrow} (\ket{x}\bra{x}) \ket{y}\\
    &=  \sum_{k=1}^{d}  \max_{x} \sum_{y=1}^{d} q^{\downarrow}_{y|k} p^{\downarrow}_{y|x} 
\end{align*}
    
where $q_{y|k} \triangleq \bra{y}\mathcal{P}(\ket{k}\bra{k})\ket{y}$ and $p^{\downarrow}_{y| x} \triangleq \bra{y} \mathcal{N}^{\downarrow} (\ket{x}\bra{x}) \ket{y}$. Finally, we manipulate the above expression to rewrite it in the form of $\text{Prob}_{T}(\mathcal{N})$:
    
\begin{align*}
    f_{\mathcal{P}}(\mathcal{N})&= \sum_{k=1}^{d} \max_{x} \Bigg( q^{\downarrow}_{1|k} p^{\downarrow}_{1|x} + \sum_{y=2}^{d} q^{\downarrow}_{y|k} (\sum_{j=1}^{y} p^{\downarrow}_{j|x} - \sum_{j=1}^{y-1} p^{\downarrow}_{j|x} ) \Bigg) \\
    &= \sum_{k=1}^{d} \max_{x} \Bigg( q^{\downarrow}_{1|k} p^{\downarrow}_{1|x} + \sum_{y=2}^{d} q^{\downarrow}_{y|k} \sum_{j=1}^{y} p^{\downarrow}_{j|x} - \sum_{y=1}^{d} q^{\downarrow}_{y+1|k} \sum_{j=1}^{y} p^{\downarrow}_{j|x} ) \Bigg) \\
    &= \sum_{k=1}^{d} \max_{x} \sum_{y=1}^{d} (q^{\downarrow}_{y|k} - q^{\downarrow}_{y+1|k}) \sum_{j=1}^{y} p^{\downarrow}_{j|x} \\
    &= \sum_{k=1}^{d} |\tilde{\mathbf{t}}_{k}| \max_{x} \sum_{y=1}^{d} \frac{{\tilde{t}}_{y|k}}{|\tilde{\mathbf{t}_{k}}|} \sum_{j=1}^{y} p^{\downarrow}_{j|x} \\
    &= \sum_{k} |\mathbf{\tilde{t}}_{k}| \text{Prob}_{\frac{\tilde{\mathbf{t}}(k)}{|\tilde{\mathbf{t}}(k)|}}(\mathcal{N})
\end{align*}
 where $\tilde{\mathbf{t}}_{k} = \{ (q^{\downarrow}_{1|k} - q^{\downarrow}_{2|k}), (q^{\downarrow}_{2|k}-q^{\downarrow}_{3|k}), ..., (q^{\downarrow}_{d-1|k}-q^{\downarrow}_{d|k}), (q^{\downarrow}_{d|k} - 0) \}$ and in line two we use the definition $q^{\downarrow}_{d+1|k} = 0$ (as $q^{\downarrow}_{y|k}$ can always be padded with extra zeros).
 
 
 Finally, we relabel the dummy variable $k$ to $z$ and simplify:
\begin{align*}
    f_{\mathcal{P}}(\mathcal{N}) &= \sum_{z} |\mathbf{\tilde{t}}_{z}| \text{Prob}_{\frac{\tilde{\mathbf{t}}_{z}}{|\tilde{\mathbf{t}}_{z}|}}(\mathcal{N}) \\
    &= (\sum_{z'} |\mathbf{\tilde{t}}_{z'}|) \sum_{z} \frac{|\tilde{\mathbf{t}}_{z}|}{\big( \sum_{z'}  |\mathbf{\tilde{t}}_{z'}| \big)} \text{Prob}_{\frac{\tilde{\mathbf{t}}_{z}}{|\tilde{\mathbf{t}}_{z}|}}(\mathcal{N}) \\
    &= (\sum_{z'} |\mathbf{\tilde{t}}_{z'}|) \text{Prob}_{T}(\mathcal{N})
\end{align*}
where $T$ is defined by its elements:
\begin{align*}
    t_{k, z} &= \frac{|\tilde{\mathbf{t}}_{z}|}{\sum_{z'}  |\mathbf{\tilde{t}}_{z'}|} \times \frac{\tilde{t}^{\downarrow}_{k|z}}{|\tilde{\mathbf{t}}_{z}|} = \frac{\tilde{t}^{\downarrow}_{k|z}}{\sum_{z'}  |\mathbf{\tilde{t}}_{z'}|}
\end{align*}

Finally, we need to show that for every $T$ with fixed $z=1$ (i.e. $T$ is a vector), there exists a classical channel $\mathcal{P}$ and positive constant $\alpha_{T}$ such that $\text{Prob}_{T}(\mathcal{N}) = \alpha_{T} f_{\mathcal{P}}(\mathcal{N})$. However, from the above expression it immediately follows that we can select $\mathbf{\tilde{t}}_{z=1}$ s.t. $ T = \frac{\tilde{\mathbf{t}}_{z=1}}{|\tilde{\mathbf{t}}_{z=1}|}$.  Then the channel $\mathcal{P}$ corresponding to $\tilde{\mathbf{t}}$ will satisfy $\text{Prob}_{T}(\mathcal{N}) = \alpha_{T} f_{\mathcal{P}}(\mathcal{N})$.

\end{proof}

\section{Proof of Lemma 3}
{\bf Lemma 3.} \emph{ The pre-order induced by the set of classical gambling games is unchanged if we restrict to the set of gambling games such that $\mathcal{T} = \mathbf{p}$ where $\mathbf{p}$ is a vector probability distribution. Equivalently, $\mathcal{M} \precsim \mathcal{N}$ if and only if $\text{Prob}_{\mathbf{p}}(\mathcal{M}) \leq \text{Prob}_{\mathbf{p}}(\mathcal{N})$ for all $\mathbf{p}$.}
\begin{proof}
Referring to the definition of $\text{Prob}_{\mathcal{T}}(\mathcal{N})$ from the previous section, 
\begin{align*}
    \text{Prob}_{\mathcal{T}}(\mathcal{N}) &= \sum_{z=1}^{\ell} \max_{x} \Big( \sum_{y=1}^{m} \sum_{w=y}^{m} t_{wz} p_{y|x} \Big) \\
    &= \sum_{z=1}^{\ell} t_{z} \max_{x} \Big( \sum_{y=1}^{m} \sum_{w=y}^{m} t_{w|z} p_{y|x} \Big) \\
    &= \sum_{z=1}^{\ell} t_{z} \text{Prob}_{\{\mathbf{t}_{z}\}}(\mathcal{N})
\end{align*}
where $t_{kz} = t_{z} t_{w|z}$ such that $\{\mathbf{t}_{z}\} = \{t_{1|z}, ..., t_{m|z}\}$. The statement then follows. 
\end{proof}

\section{Details of Remark 2}
We here introduce a lemma, from which remark 2 follows.

\begin{lemma} The set of random unitary superchannels are equivalent to the set of uniformity-preserving superchannels. Equivalently, $\Theta$ is a classical uniformity-preserving superchannel if and only if $\Theta$ can be written as:
\begin{align*}
    \Theta[\mathcal{N}](\ket{x}\bra{x}) = \sum_{z} p(z|x) \mathcal{V}_{z}\Big( \mathcal{N} \Big( \mathcal{S}(\ket{x}\bra{x}) \Big)\Big) \ \ \ \forall \ x
\end{align*}
\end{lemma}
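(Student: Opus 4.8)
The plan is to prove the two inclusions separately. The implication ``random unitary $\Rightarrow$ uniformity-preserving'' is essentially a one-line verification, while the converse is the substantive part and rests on the canonical realization of a classical superchannel together with the Birkhoff--von Neumann theorem.

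First I would dispatch the easy direction. Suppose $\Theta$ has the stated random-unitary form and feed it the uniform (completely randomizing) channel $\mathcal{U}$, whose output is the maximally mixed state $\mathds{1}/d$ independently of its input. Then $\mathcal{U}(\mathcal{S}(\ket{x}\bra{x}))=\mathds{1}/d$ for every $x$, so the pre-processing $\mathcal{S}$ plays no role; and since each permutation $\mathcal{V}_z$ fixes $\mathds{1}/d$, we get $\Theta[\mathcal{U}](\ket{x}\bra{x})=\sum_z p(z|x)\,\mathds{1}/d=\mathds{1}/d$ for all $x$. Hence $\Theta[\mathcal{U}]$ is again uniform, so $\Theta$ is uniformity-preserving (and in particular $|A_1|=|B_1|$).

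For the converse I would start from the general realization of a classical superchannel: there is a pre-processing classical channel that, on input $\ket{x}\bra{x}$, produces an input to the inner channel $\mathcal{N}$ together with a classical memory value $r$, followed by $\mathcal{N}$, followed by a memory-controlled post-processing $\{\mathcal{D}_r\}$. Writing out $\Theta[\mathcal{N}](\ket{x}\bra{x})$ and imposing uniformity preservation in its complete form --- i.e. that the memory register cannot be used to import nonuniformity by correlating it with a reference --- I would argue that \emph{each} branch $\mathcal{D}_r$ must send $\mathds{1}/d$ to $\mathds{1}/d$, i.e. is a doubly-stochastic (unital) channel; probing the memory with the basis states $\ket{r}\bra{r}$ isolates the individual branches. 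The Birkhoff--von Neumann theorem then expresses each $\mathcal{D}_r=\sum_\pi c_{\pi|r}\,\Pi_\pi$ as a convex mixture of permutations $\Pi_\pi$. Merging the memory label $r$ and the permutation label $\pi$ into a single index $z$, with induced conditional distribution $p(z|x)$ obtained from the memory statistics and the coefficients $c_{\pi|r}$, and collecting the pre-processing into a single channel $\mathcal{S}$, I would recover exactly the asserted form $\Theta[\mathcal{N}](\ket{x}\bra{x})=\sum_z p(z|x)\,\mathcal{V}_z(\mathcal{N}(\mathcal{S}(\ket{x}\bra{x})))$.

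The hard part will be the upgrade from ``the $x$-averaged post-processing preserves uniformity'' to ``every memory branch preserves uniformity'': the bare requirement $\Theta[\mathcal{U}]=\mathcal{U}$ only constrains a fixed convex combination of the branches $\mathcal{D}_r(\mathds{1}/d)$, so one genuinely needs the complete (reference-robust) notion of uniformity preservation to force each branch to be unital. A secondary bookkeeping issue is that the pre-processing may correlate the inner-channel input with the memory $r$; I would need to verify that, after the Birkhoff decomposition, the reindexing $z=(r,\pi)$ honestly yields a conditional distribution $p(z|x)$ together with a $z$-independent pre-processing $\mathcal{S}$, which is the point where the argument must be carried out with care rather than by the symbolic manipulation above.
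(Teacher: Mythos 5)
Your easy direction is exactly the paper's. The converse, however, contains a self-inflicted obstacle that the paper's proof shows to be unnecessary: you insist that \emph{each} memory branch $\mathcal{D}_r$ of the post-processing be unital, correctly observe that the bare condition $\Theta[\mathcal{R}]=\mathcal{R}$ only constrains the average $\sum_r p(r|x)\,\mathcal{D}_r(\mathds{1}/d)$, and then propose to rescue the step by invoking a strengthened, ``reference-robust'' notion of uniformity preservation. That strengthening is not part of the lemma's hypothesis, so as written your argument proves a different statement. The idea you are missing is that branch-wise unitality is not needed, because the target form $\sum_z p(z|x)\,\mathcal{V}_z(\mathcal{N}(\mathcal{S}(\ket{x}\bra{x})))$ is itself only a statement about the $x$-conditioned \emph{average} of the post-processing. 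The paper applies Birkhoff--von Neumann directly to the channel $\mathcal{Q}_x:=\sum_z p(z|x)\,\mathcal{P}_z$, which the bare condition, evaluated at input $\ket{x}\bra{x}$, forces to be doubly stochastic for every $x$; writing $\mathcal{Q}_x=\sum_\pi c_{\pi|x}\,\mathcal{V}_\pi$ over the fixed finite set of permutations of the output alphabet and setting $p(\pi|x)=c_{\pi|x}$ lands exactly in the desired form, with no need to touch the individual branches. Replacing your branch-wise decomposition by this averaged one makes your ``hard part'' disappear.

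Your secondary bookkeeping worry --- that the pre-processing may correlate the inner-channel input with the memory, so that after conditioning on the memory value the pre-processing becomes memory-dependent and cannot obviously be collected into a single $\mathcal{S}$ --- is legitimate, but it is one the paper's own proof also glosses over: it simply writes the general classical superchannel with a product-form pre-processing, i.e.\ with $\mathcal{S}(\ket{x}\bra{x})$ independent of $z$. So this is not a defect specific to your route, though it would need to be resolved for either argument to be fully rigorous.
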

\begin{proof}
First, we show that the above form is uniformity preserving

\begin{align*}
    \Theta[\mathcal{R}](\ket{x}\bra{x}) &= \sum_{z} p(z|x) \mathcal{V}_{z}\Big(\mathcal{R}(\mathcal{S}(\ket{x}\bra{x})) \Big) \ \ \ \forall x \\
    &=  \sum_{z} p(z|x) \mathcal{V}_{z}\Big( \frac{\mathbb{I}}{|B|}  \Big) \ \ \ \forall x \\
        &=  \sum_{z} p(z|x) \frac{\mathbb{I}}{|B|} \ \ \ \forall x \\
        &= \frac{\mathbb{I}}{|B|}  \ \ \ \forall x \\
\end{align*}

Next, we show that any classical uniformity preserving superchannel $\Theta$ can be written in the above form.  First, we note that the most general classical superchannel can be written as 

\begin{align*}
    \Theta[\mathcal{N}](\ket{x}\bra{x}) = \sum_{z} p(z|x) \mathcal{P}\Big( \mathcal{N} \Big( \mathcal{S}(\ket{x}\bra{x}) \Big)  \otimes \ket{z}\bra{z}  \Big) \ \ \ \forall \ x
\end{align*}

where $\mathcal{P}_{z}$ is an arbitrary post-processing channel. (Depicted in the following figure with dummy variable $z \rightarrow w'$):
\begin{figure}[h]
\large
\begin{overpic}[scale=.87]{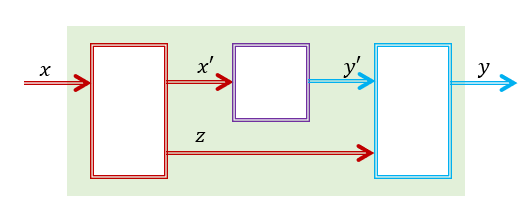}
\put(61,53){$\mathcal{S}$}
\put(137,64){$\mathcal{N}$}
\put(210,53){$\mathcal{P}$}
\put(85, 105){Superchannel $\Theta[\mathcal{N}]$}
\end{overpic}
\caption{\linespread{1}\selectfont{\small General classical superchannel. }} 
  \label{channel3}
\end{figure}

Upon imposing the restraint that $\Theta$ be uniformity-preserving, we have

\begin{align*}
     \frac{\mathbb{I}}{|B|} &= \Theta[\mathcal{R}] \big( \ket{x}\bra{x} \big) \ \ \ \forall x \\
    &= \sum_{z} p(z|x) \mathcal{P}\Big( \mathcal{R} \Big( \mathcal{S}(\ket{x}\bra{x}) \Big)  \otimes \ket{z}\bra{z}  \Big) \ \ \ \forall \ x \\
    &= \sum_{z} p(z|x) \mathcal{P}\Big( \frac{\mathbb{I}}{|B|}  \otimes \ket{z}\bra{z}  \Big) \ \ \ \forall x \\
    &= \sum_{z} p(z|x) \mathcal{P}_{z}\Big( \frac{\mathbb{I}}{|B|}  \Big) \ \ \ \forall x 
\end{align*}
where $\mathcal{P}_{z}(\rho) \triangleq \mathcal{P}(\rho \otimes \ket{z}\bra{z})$. Finally, denote by $P^{x}$ the transition probability matrix corresponding to the channel $\sum_{z} p(z|x) \mathcal{P}_{z}$. Clearly, $\sum_{z} p(z|x) \mathcal{P}_{z}$ must be unital for all $x$ s.t. 
\begin{align*}
    P^{x} \mathbf{u}_{B} = \mathbf{u}_{B}
\end{align*}
where $\mathbf{u}_{B}$ is the uniform distribution for system $B$. It follows from the above that $P^{x}$ is doubly stochastic for all $x$, and therefore (from Birkhoff's lemma), can be written as a convex combination of permutation matrices. 

Then there exists a set of permutation matrices $\{V_{w, x}\}$ corresponding to isometry channels $\{\mathcal{V}_{w, x}\}$ and probability distribution $\sum_{w} q_{w|x} = 1$ for all $x$ s.t. 
\begin{align*}
    \sum_{z} p(z|x) \mathcal{P}_{z} & \triangleq \sum_{w} q_{w| x} \mathcal{V}_{w, x} \ \ \ \forall x
\end{align*}
Finally, we can then rewrite our superchannel $\Theta$ as:

\begin{align*}
    \Theta[\mathcal{N}](\ket{x}\bra{x}) &= \sum_{z} p(z|x) \mathcal{P}_{z}\Big( \mathcal{N} \Big( \mathcal{S}(\ket{x}\bra{x}) \Big) \Big) \ \ \ \forall \ x \\
    &= \sum_{t} q_{t} \mathcal{V}_{t}\Big( \mathcal{N} \Big( \mathcal{\tilde{S}}(\ket{x}\bra{x}) \Big) \Big)
\end{align*}
where $\tilde{S}$ is related to the original pre-processing channel as shown in Fig. 6 and $t = (w, z)$.

\begin{figure}[h]
\large
\begin{overpic}[scale=.84]{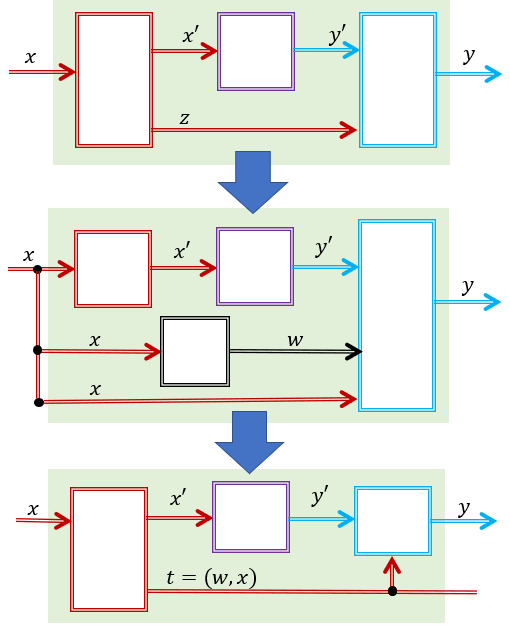}
\put(54,273){$\mathcal{S}$}
\put(125,289){$\mathcal{N}$}
\put(197,275){$\mathcal{P}$}
\put(53,176){$\mathcal{S}_{1}$}
\put(125,178){$\mathcal{N}$}
\put(190,160){$\mathcal{V}_{w, x}$}
\put(95,137){$\mathbf{q}$}
\put(52,37){$\tilde{\mathcal{S}}$}
\put(122,53){$\mathcal{N}$}
\put(194,52){$\mathcal{V}_{t}$}
\end{overpic}
  \label{channel3}
  \caption{Restructuring a general uniformity preserving superchannel.}
\end{figure}

\end{proof}

\section{Proof of Theorem 3}
{\bf Theorem 3.}\emph{
Let $H$ be a classical dynamical entropy that reduces to the Shannon entropy, $H_{S}$, on classical states. Then for all $\mathcal{N} \in \mathfrak{L}({\text{A} \rightarrow \text{B}})$, 
\begin{align*}
    H(\mathcal{N}) &= \min_{x} H_{S}\Big(\mathcal{N}(\ket{x}\bra{x})\Big)
\end{align*}
}
\begin{proof}

 We first refer to previous literature on extensions of channel divergences~\cite{Tomamichal, Gour4}. From this, we define the minimal extension $\underline{D}$ and maximal extension $\overline{D}$ of any channel divergence $\mathbb{D}$ which reduces to the Kullback-Leibler divergence $D$ on states as:
 \begin{align*}
     \underline{D}(\mathcal{N} \big| \big| \mathcal{M}) & \triangleq \max_{x} \Big( D \Big(\mathcal{N}\big(\ket{x}\bra{x}) \Big| \Big| \mathcal{R} (\ket{x}\bra{x}) \Big) \Big) \\
     \overline{D}(\mathcal{N} \big| \big| \mathcal{M}) &=  \inf_{\mathbf{p}, \mathbf{q}} \Big( D(\mathbf{p} \big| \big| \mathbf{q}) \Big)
 \end{align*}
 where the infimum in the second line is over all $(\mathbf{p}, \mathbf{q})$ s.t. $(\mathbf{p}, \mathbf{q}) \prec
\big( \mathcal{N}(\ket{x}\bra{x}), \mathcal{M}(\ket{x}\bra{x}) \big)$ for all $x$. We may then define the corresponding regularized extensions as 
\begin{align*}
    \underline{D}^{\text{reg}}(\mathcal{N} \big| \big| \mathcal{M}) &= \lim_{k \rightarrow \infty} \frac{1}{k} \underline{D}(\mathcal{N}^{\otimes k } \big| \big| \mathcal{M}^{\otimes k}) \\ 
    \overline{D}^{\text{reg}}(\mathcal{N} \big| \big| \mathcal{M}) &= \lim_{k \rightarrow \infty} \frac{1}{k} \overline{D}(\mathcal{N}^{\otimes k } \big| \big| \mathcal{M}^{\otimes k})
\end{align*}

This allows us to then define a minimal extension $\underline{H}$ and maximal extension $\overline{H}$ of $H$:
\begin{align*}
   \underline{H}(\mathcal{N}) & \triangleq \text{log}|B| - \overline{D}^{\text{reg}}(\mathcal{N} \big| \big| \mathcal{R}) \\
     \overline{H}(\mathcal{N}) & \triangleq \text{log}|B| - \underline{D}(\mathcal{N} \big| \big| \mathcal{R}) \\
\end{align*}
Since  $H$ reduces to the Shannon entropy on classical states, then $\text{log}|B|-H(\mathcal{N})$ reduces to $D(\mathcal{N} \big| \big| \mathcal{R})$.
In~\cite{Gour4}, it was shown that $ \underline{D}(\mathcal{N}) \leq \mathbb{D}(\mathcal{N}) \leq \overline{D}^{\text{reg}}(\mathcal{N})$, from which it then follows that 
\begin{align*}
    \underline{H}(\mathcal{N}) \leq H(\mathcal{N}) \leq \overline{H}(\mathcal{N})
\end{align*}
It was additionally shown in~\cite{Gour4} that
\begin{align*}
\underline{D}(\mathcal{N} \big| \big| \mathcal{R}) &= \overline{D}^{\text{reg}}(\mathcal{N} \big| \big| \mathcal{R})
\end{align*}
Thus, 
\begin{align*}
    \underline{H}(\mathcal{N}) \leq H(\mathcal{N}) \leq \overline{H}(\mathcal{N}) = \underline{H}(\mathcal{N})
\end{align*}
and so
\begin{align*}
    H(\mathcal{N}) &= \text{log}|B| - \max_{x} \Big( D \Big(\mathcal{N}\big(\ket{x}\bra{x}) \big| \big| \mathcal{R} (\ket{x}\bra{x}) \Big) \Big) \\
    &= \min_{x} H_{S}(\mathcal{N}(\ket{x}\bra{x}))
\end{align*}
where $H_{S}$ is the Shannon entropy.

Finally, we note that $H(\mathcal{N}) = \min_{x} H_{S}\big( \mathcal{N}(\ket{x}\bra{x}) \big)$ is an asymptotically continuous entropy function. It has previously been shown~\cite{Andreas_Winter} that the following function is asymptotically continuous:

\begin{align*}
f(\mathcal{N}^{\text{A} \rightarrow \text{B}}) & \triangleq \max_{\psi_{RA} \in \mathcal{D}(RA)} D(\mathcal{N}(\psi_{RA}) \big| \big| \mathcal{R}(\psi_{RA}))
\end{align*}

where $D(\rho \big| \big| \sigma) = \text{Tr}[\rho (\text{log} \rho - \text{log} \sigma)]$ and where $\mathcal{D}(RA)$ denotes the set of density matrices corresponding to systems $RA$. \\~\\
In the case where $\mathcal{N}$ is a classical channel, the maximum can be achieved with a trivial reference system, such that $f$ becomes:

\begin{align*}
f(\mathcal{N}) & = \max_{ \psi_{A} \in \mathcal{D}(A)} D \Big(\mathcal{N}(\psi_{A}) \big| \big| \mathcal{R}(\psi_{A}) \Big) \\
& =  \max_{ x } D \Big(\mathcal{N}(\ket{x}\bra{x}) \big| \big| \mathcal{R}(\ket{x}\bra{x}) \Big) \\
\end{align*}
where in the second line $D$ reduces to the Kullback-Leibler divergence, and we note that it is sufficient to consider pure classical states when $\mathcal{N}$ is a classical channel.  Thus, $H(\mathcal{N}) = \text{log}|B| - f(\mathcal{N})$ and $H$ is the unique asymptotically continuous channel entropy function.


\end{proof}

\end{document}